\newtheoremstyle{theor}
{6pt}
{6pt}
{\itshape}
{1em}
{\scshape}
{.}
{ }
{}
\theoremstyle{theor}
\newtheorem{lemma}{Lemma}[section]
\theoremstyle{theor}
	\newtheorem{theorem}{Theorem}[section]
\begin{abstract}
		Topological event detection allows for the distributed computation of homology by focusing on local changes occurring in a network over time.
		In this paper, a model for the monitoring of topological events in dynamically changing regions will be developed.
		Regions are approximated as the connected components of the communication graph of a sensor network, reducing homology computation to graph homology. 
		Betti number differences together with cyclic neighbor-rings are used to categorize topological event types.
		The focus lies on the correct detection of non-incremental (i.e., multiple concurrently occurring) events and the necessary region update process. 
		Network number differences between a network's state before and after events are spread from event nodes into network regions, 
		allowing for the conflict-free updating of regions independent of the update messages' order of arrival.
	\end{abstract}
\title{Distributed Monitoring of\\Topological Events via Homology}
\author{
	Vincent T.~H. Knapps and Karl-Heinz Zimmermann
	\\Hamburg University of Technology
	\\21071 Hamburg, Germany
}
\begin{document} 
	
\maketitle

\IEEEdisplaynontitleabstractindextext

%
\IEEEpeerreviewmaketitle


\IEEEraisesectionheading{\section{Introduction} \label{ch1}}

%
%
%
%

\IEEEPARstart{T}{opology} captures a network's essential structure; i.e.,
the connected components and holes of the corresponding communication graph.
These are properties which are of importance not only for the in-network routing of messages but also for the representation of a region monitored by a sensor network. 
Often more relevant than a precise approximation of a monitored region is the development of its structure over time; 
the topological changes a region undergoes.
Environmental monitoring entails the periodic measurement and analysis of
geospatial data. Applications include surveillance with moving sensors, the information exchange in multi-agent systems for the purposes of exploration is an example, 
and the monitoring of dynamic regions; usually the collected data is used for predictions (e.g., for weather forecasts). 
\par
The topic of this paper is the monitoring of topological events in dynamically evolving regions via homology. 
These events represent fundamental changes to the topology of observed regions; 
the formation of holes and the merging of different regions are prominent examples.
Topological event detection is an appropriate method for the distributed computation of a region's topological properties:
A monitoring sensor network can be subdivided into connected components whose sensors share the same readings. 
Together they represent a so-called clique complex for which homology can easily be computed by means of graph homology. 
Moreover, Betti number differences resulting from changes in sensor readings then can be used to detect topological events.
All events are detected locally at event nodes, where the required Betti number differences can be inferred by querying neighboring nodes.
\par
The remainder of this paper is organized as follows:
Section~\ref{ch2} provides general definitions necessary for 
topological event detection, and Section~\ref{ch3} defines the topological event model central to this paper. 
A combination of Betti number differences and a cyclic neighbor-ring is employed for event detection.
As main part of this paper, Section~\ref{ch4} details the distributed region update process initiated upon event detection. 
This update process enables the detection of non-incremental (i.e., multiple concurrently occurring) events. 
Finally, in Section~\ref{ch5}, results of topological event detection for simulated forest fires are presented.

\section{Background}\label{ch2}
This introductory section provides an overview on recent research activities focusing on applying homology for the purpose of network analysis.
Moreover, the necessary background on topology and homology is presented~\cite{hom1,hom2}.

\subsection{Related Work}
There exist several attempts in which efforts were made to employ homology for network analysis. 
In~\cite{hole_detect} a criterion for hole detection in wireless sensor networks (WSNs) with uniform coverage radii is developed:
\par
Two Rips complexes are constructed as bounds for the union of coverage disks. 
Generators of the first homology group which are valid for both are proven to be network holes. 
However, the presented criterion comes with some disadvantages. 
The found generators may be non-minimal, some holes remain undetectable 
(in~\cite{hom_tr_holes} the undetectable holes are identified as triangular holes) 
and all necessary computations are executed centralized.
\par
Another approach~\cite{hom_coverage} is the usage of a coverage criterion.
A network is surrounded by a cycle of boundary nodes. 
When the boundary cycle can be filled in with 2-simplices of network nodes, no holes exist. 
To this end, the second homology group relative to the boundary cycle is computed. 
At least one hole exists if no generator can be found. 
This work was extended in~\cite{hom_distr}, where a distributed algorithm for the coverage criterion is developed.
All above mentioned works focus on the description of one network state. Instead of analyzing one snapshot of a complete network state at a time, the focus
of the research field of \textit{topological events} lies solely on the parts of a network changing over time. 
For example, in~\cite{complex_region} monitored areas are partitioned into different region types. 
Regions of successive network snapshots are matched, and topological events are determined according to observed region type changes. 
Contrary to the previous example, in~\cite{dec_top_event} a distributed algorithm 
for topological event detection is developed. Network nodes approximate the boundary of an areal object and are classified in relation
to their positions at this boundary. The presented algorithm detects events locally at event nodes based on the nodes' and their neighbors'
boundary-state transitions.
\par
All of these works have in common a focus on incremental event detection.
Only one node's reading can change at a time; i.e., all events are detected successively. 
Consequently, the process of region updates, in which region identifiers are distributed in response to topological events, is not considered. 
In view of this observation, the contribution of this paper then is the development of a distributed region update process which allows for the detection of non-incremental events.

\subsection{Homology and Betti Numbers}\label{ch2:betti}
In algebraic topology, the information about a topological space $X$ can be encoded by a chain complex $C(X)$,
which is a sequence of abelian groups $(C_i)_{\geq 0}$ connected by homomorphisms $\partial_n:C_n\rightarrow C_{n-1}$ called boundary operators.
It is necessary that the composition of any two consecutive boundary operators is trivial; that is, $\partial_n\circ\partial_{n+1}=0$ for each $n\geq 0$.
Thus the image of the boundary operator $\partial_{n+1}$ is contained in the kernel of the boundary operator $\partial_n$; 
that is, ${\rm im}(\partial_{n+1})\subseteq{\rm ker}(\partial_n)$.
The elements of $B_n(X) = {\rm im}(\partial_{n+1})$ are called boundaries
and the elements of $Z_n(X)={\rm ker}(\partial_n)$ are called cycles.
\par
The quotient group $H_n(X)=Z_n(X)/B_n(X)$ is called the $n$th homology group of $X$. Elements of $H_n(X)$ are called homology classes and the rank of $H_n(X)$ is called the $n$th Betti number of $X$; it is denoted by $\beta_n(X)$.
The number $\beta_n(X)$ counts the number of $n$-dimensional holes of a topological space $X$.
In particular, $\beta_0$ is the number of connected components of $X$,
$\beta_1$ is the number of one-dimensional holes of $X$, and 
$\beta_2$ is the number of two-dimensional cavities of $X$.
\par
For instance, a graph with $n$ vertices, $m$ edges, and $k$ connected 
components has the 	Betti numbers $\beta_0 = k$, $\beta_1=m-n+k$, and 
$\beta_n=0$ for $n\geq 2$.

\subsection{Euler Characteristic}\label{ch2:euler}
An abstract simplicial complex is a family $\Delta$ of non-empty finite subsets of a set $S$ which is closed under the operation of taking subsets.
The finite sets in $S$ are called faces of the complex.
A complex $\Delta$ is finite if it has finitely many faces.
The dimension of a face $F$ is defined as $|F|-1$.
In particular, the zero-dimensional faces are called the vertices of $S$.
And the dimension of the complex $\Delta$ is given by the largest dimension of any of its faces.
\par
For instance, let $V$ be a finite subset of $S$ of cardinality $n+1$, and let $\Delta=2^V$ be the power set of $V$.
Then $\Delta$ is called a combinatorial $n$-\emph{simplex} with vertex set $V$.
Each combinatorial $n$-simplex has dimension $n$.
In particular, if $V=S=\{0,1,\ldots,n\}$, then $\Delta=2^V$ is called the standard combinatorial $n$-simplex.
\par
The \emph{Euler characteristic} of an abstract simplicial complex $\Delta$ is the alternating sum 
\begin{equation} \label{ch2:eq:euler}
\chi(\Delta) = k_0 - k_1 + k_2 - k_3 + \ldots, 
\end{equation}
where $k_n$ is the number of combinatorial $n$-simplices of $\Delta$.
More generally, the Euler characteristic of a topological space $X$ is the alternating sum
\begin{equation} \label{ch2:eq:euler:betti}
\chi(X) = \beta_0 - \beta_1 + \beta_2 - \beta_3 + \ldots, 
\end{equation}
where $\beta_n$ denotes the $n$th Betti number of $\Delta$.
For abstract simplicial complexes these two definitions will yield the same value for $\chi$.
\par
For instance, a finite connected planar graph $G$ with $n$ vertices, $m$ edges, and $f$ faces including the exterior face has Euler characteristic
$\chi(G) = n-m+f =2$.
In particular, if $G$ is a tree, then $m=n-1$ and $f=1$.
If the graph $G$ has $k$ connected components, then $n-m+f-k=1$.
More generally, the Euler characteristic was first defined for the surface of polyhedra as $\chi=n-m+f$.
In particular, the surface of any convex polyhedron has Euler characteristic $\chi=n-m+f=2$.

\subsection{Clique Complexes}\label{ch2:graph}
Clique complexes form a subclass of abstract simplicial complexes.
The \emph{clique complex} $X(G)$ of an undirected graph $G$ has a combinatorial simplex for each clique of the graph (Fig.~\ref{f-cc}).
Since each subset of a clique is also a clique, the family of sets forms an abstract simplicial complex.
The 1-skeleton of $X(G)$ is an undirected graph whose vertices correspond one-to-one with the 1-element sets in the family
and whose edges are associated one-to-one with the 2-element sets in the family; i.e., the 1-skeleton of $X(G)$ is isomorphic to $G$.
	
\begin{figure}[t]
\begin{center}
\mbox{\includegraphics[scale=0.40]{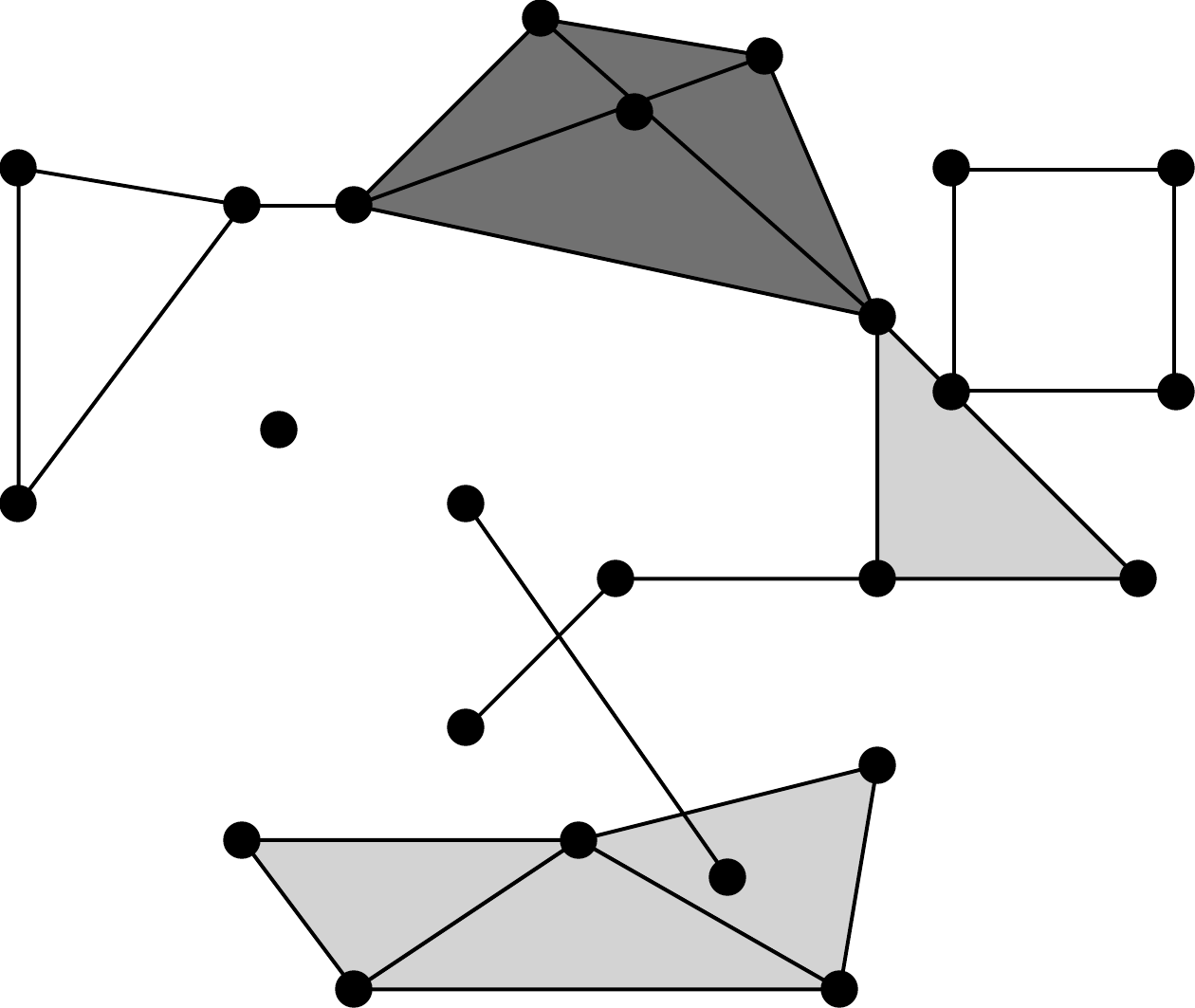}}
\end{center}
\caption{Clique complex of a graph.}\label{f-cc}
\end{figure}

\subsection{Whitney Triangulation}\label{ch2:wt}

A triangulation of a topological space $X$ is an abstract simplicial complex $\Delta$ which is homeomorphic to $X$.
Triangulations are important in algebraic topology since they allow to compute homology and cohomology groups of triangulated spaces.
A \emph{Whitney triangulation} of a compact surface $X$ is an embedding of an undirected graph $G$ onto the manifold such that 
the faces (triangles) of the embedding are exactly the cliques (triangles) of the graph~\cite{whitney}.
The resulting clique complex then is homeomorphic to the surface.
The neighborhood set of a vertex in a Whitney triangulation is either cyclic (in the interior) or forms a path (at the boundary).
A Whitney triangulation is closed exactly when its 1-skeleton is a locally cyclic graph, i.e., each vertex has a cyclic neighborhood structure.
\par
A Whitney triangulation of a compact surface contains combinatorial simplices up to dimension two; i.e., the corresponding clique complex $\Delta_2$
is a 2-clique complex and thus the Euler characteristic for $\Delta_2$ is given by
\begin{eqnarray}
\chi(\Delta_2) = n-m+f=\beta_0-\beta_1,
\end{eqnarray}
where $n$, $m$, and $f$ denote the number of vertices, edges, and faces, respectively.
Thus the number of holes in a Whitney triangulation is given as
\begin{eqnarray}
\beta_1 = -n+m-f+\beta_0.
\end{eqnarray}

\section{Topological Event Detection} \label{ch3} 

This section describes a formal model for topological event detection via homology,
which is based in parts on the research of Farah et al.~\cite{top_event_hom,top_event_ring}.
Differences of Betti numbers \cite{top_event_hom} are used to represent topological changes.
Moreover, network nodes possess a cyclic ordering of their neighbors, the so-called neighbor-ring \cite{top_event_ring}, with which events can be detected as binary patterns.
\par
In the following, the topological space given by a forest fire will serve as an application example for topological event detection. 
To this end, it is assumed that the observed space contains a static sensor network which updates measured values to approximate regions covered by fire. 
The common model makes use of a fire index (FI), a value incorporating information such as humidity, temperature, and wind speed.

\subsection{Sensor Network Model}\label{ch3:sensor_model}

Let $X$ be a bounded region embedded in the Euclidean plane ${\mathbb R}^2$.
Each point $p \in X$ is denoted as a vector $(x_p,y_p,t_p)$, 
where the first two coordinates represent the spatial position and the third one 
represents a measurable scalar value. 
In the context of forest fire monitoring, the value $t_p$ represents the FI-value at the 
given position. Using a threshold value $\theta$, the measured scalar values can be 
discretized into binary values (0 and~1). For this, let $\hat t_p$ denote the binary value 
representing the discretized scalar value at sensor location $p$. The connected components 
with FI-values of~one form to be monitored fire regions.
\par
Let $S$ denote a non-empty finite set of monitoring sensors stationed in $X$.
Consider a Whitney triangulation $\Delta$ of the region $X$ whose vertex set is given by $S$.
The neighborhood structure of each interior node is cyclic, while the neighborhood 
structure of each boundary node forms a path.
By deleting the nodes in the triangulation $\Delta$ (including their edges) which have 
FI-values of~zero, one obtains the clique complex $\Delta_\theta$ of the to be monitored fire 
regions in $X$. We assume that for each sensor $s\in S$ of the network the following data are available:
\begin{itemize}
\item  neighborhood structure $N_s$,
\item  measured FI-value $t_s$ and associated binary value~$\hat t_s$,
\item  component information (number of nodes, edges, and faces) $(n_s,m_s,f_s)$, 
\item  component-ID $c_s$ and sensor-ID $\pi_s$.
\end{itemize}
Note that each sensor in $\Delta_\theta$ possesses information about the connected component to which it belongs. 
This is expressed by the component information and the component-ID. 
Also note that the clique complex $\Delta_\theta$ is embedded in the plane
and so the zeroth Betti number $\beta_0$ is the number of fire regions (connected components) and the first Betti number $\beta_1$ is the number of one-dimensional holes
(Fig.~\ref{f1}).

\begin{figure}[t]
\centering
\includegraphics[width=60mm]{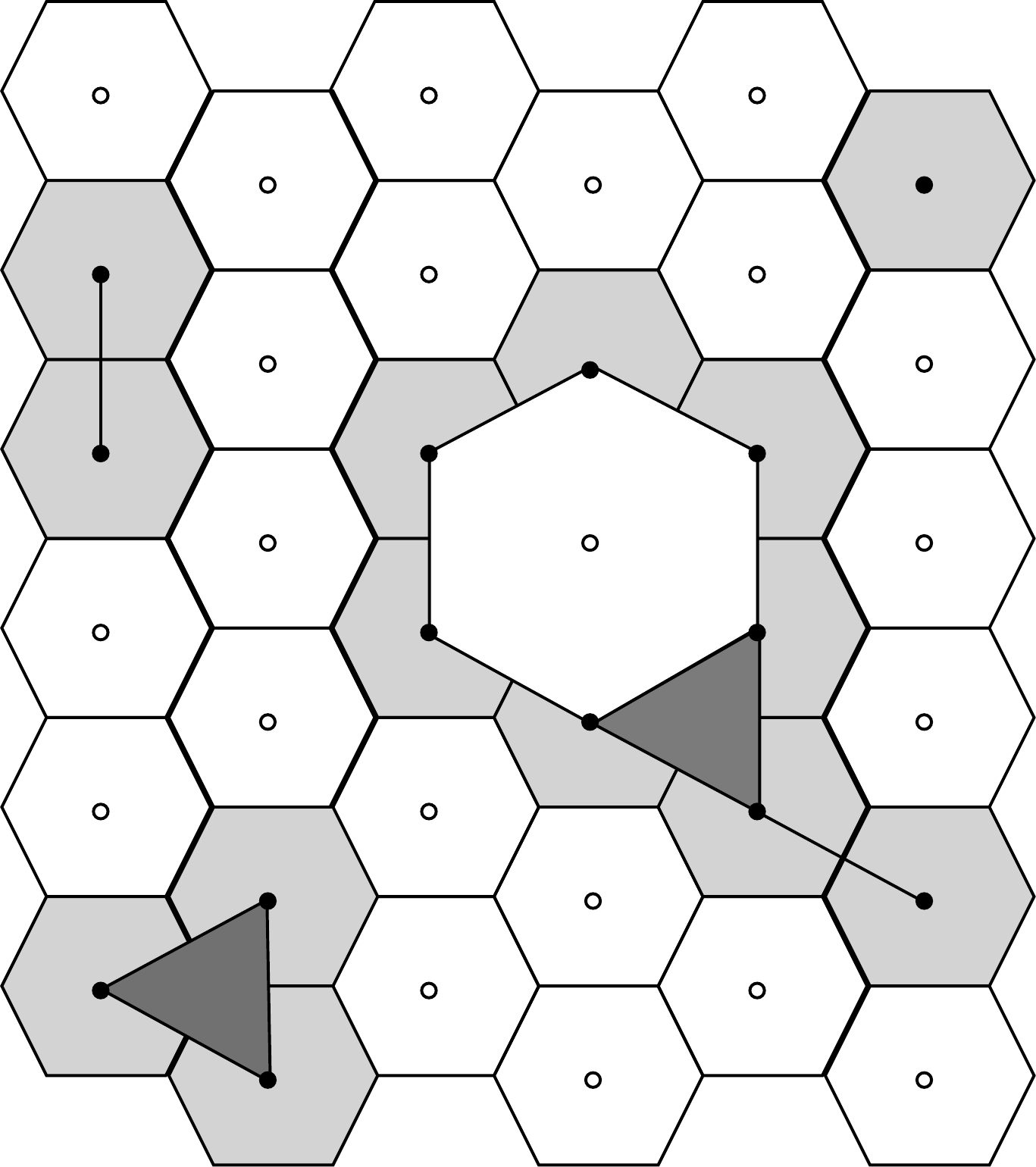}
\caption{ \footnotesize
A sensor network monitoring fire regions indicated by gray areas. 
The sensor $v_0$ lying inside a hole has a FI-value of~0 and cyclic neighbor list $N_0 = (v_1,v_2,v_3,v_4,v_5,v_6)$, the sensors of which all have FI-values of~1. 
The fire region surrounding $v_0$ has eight vertices, nine edges, and one shaded region (triangle). Its Betti numbers are $\beta_0=1$ and $\beta_1=-8+9-1+1=1$.}\label{f1}
\end{figure}

\subsection{Topological Events}\label{ch3:top_event}

A \emph{topological event} is a change of one or more topological invariants (i.e., Betti 
numbers) occurring in a topological space $X$ between two points in time.
Topological events were first defined via Betti number differences in \cite{top_event_hom}.
Write $\beta_n^{(t)}$ for the $n$th Betti number of the network observed at time $t$.
Then the difference between the Betti numbers of the network observed at successive time steps $t$ and $t'$ ($t'>t$) is given by
\begin{equation}
\Delta\beta_k(t,t') = \beta_k^{(t')} - \beta_k^{(t)}, \quad k\geq 0.
\end{equation}
A \emph{positive} topological event can be described as a mapping 
between two topological spaces $\phi: X \rightarrow X'$ which is injective but not surjective (e.g., addition of a new node to $\Delta_\theta$),
whereas a \emph{negative} topological event is a
surjective but not injective mapping (e.g., deletion of a node from $\Delta_\theta$).
\par
In principle, there are nine types of topological events for two-dimensional spaces (Tab.~\ref{t-events}), 
which can be completely described by the first two Betti numbers $\beta_0$ and $\beta_1$.  
The first eight event types are causing a change of topological invariants (Figs.~\ref{ch3:events_start}-\ref{ch3:events_end}),
while the ninth event type of topological invariance describes the case when no topological relevant event occurs, i.e., $\Delta\beta_0 = \Delta\beta_1 = 0$; for a positive event this means a growing region, for a negative event a shrinking region. 
In the following, we will describe how these events can be detected using a neighbor-ring in addition to Betti number differences.

\begin{figure}[t]
	\centering
	\includegraphics[width=30mm]{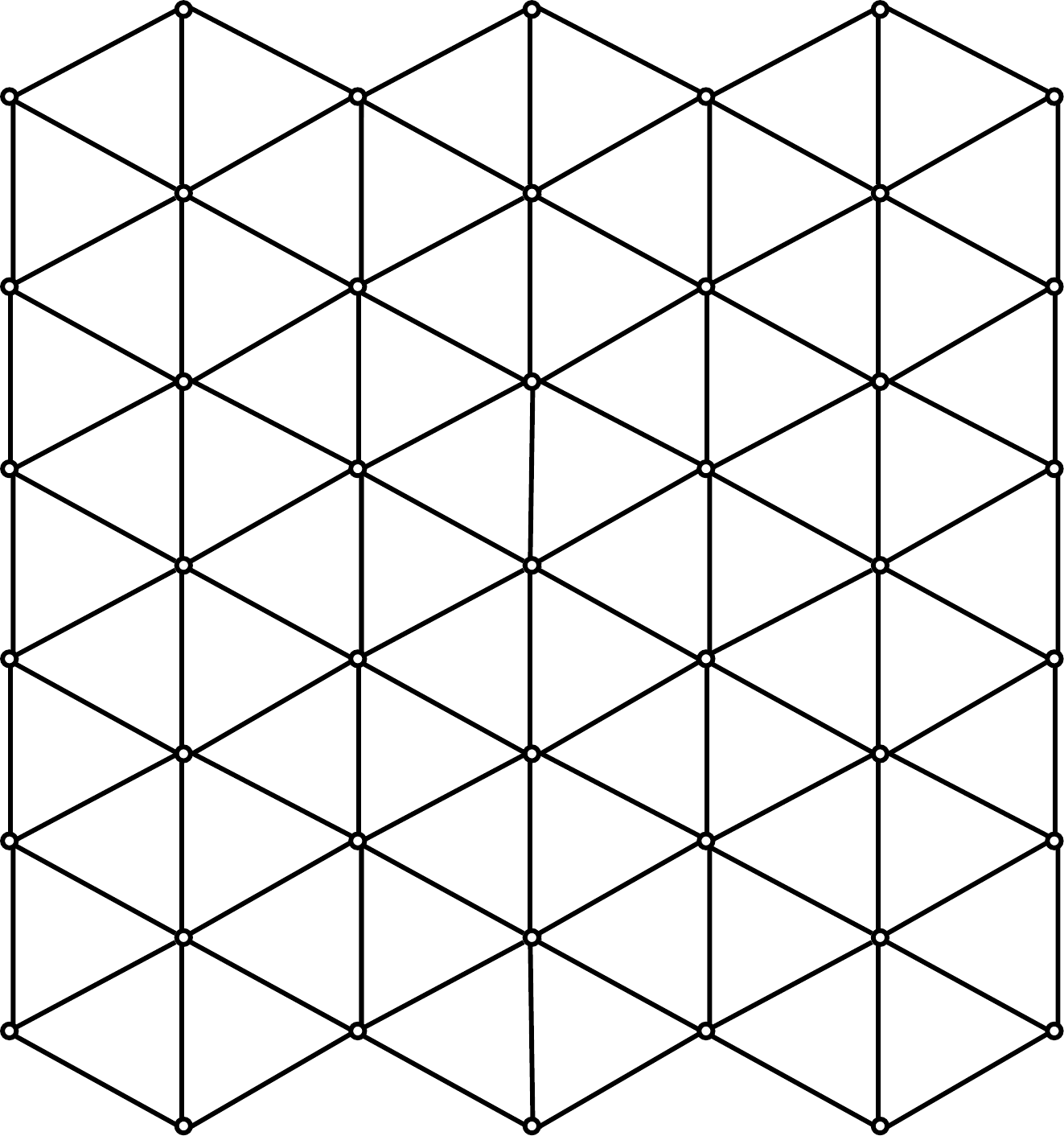}
	\begin{minipage}[b][8em][c]{.11\textwidth}
		\centering
		$\Delta\beta_0 >= 1$\\
		$\Delta\beta_1 = 0$\\
		{\fontsize{45}{45} $\rightleftarrows$}\\
		$\Delta\beta_0 <= -1$\\
		$\Delta\beta_1 = 0$\\
	\end{minipage}
	\includegraphics[width=30mm]{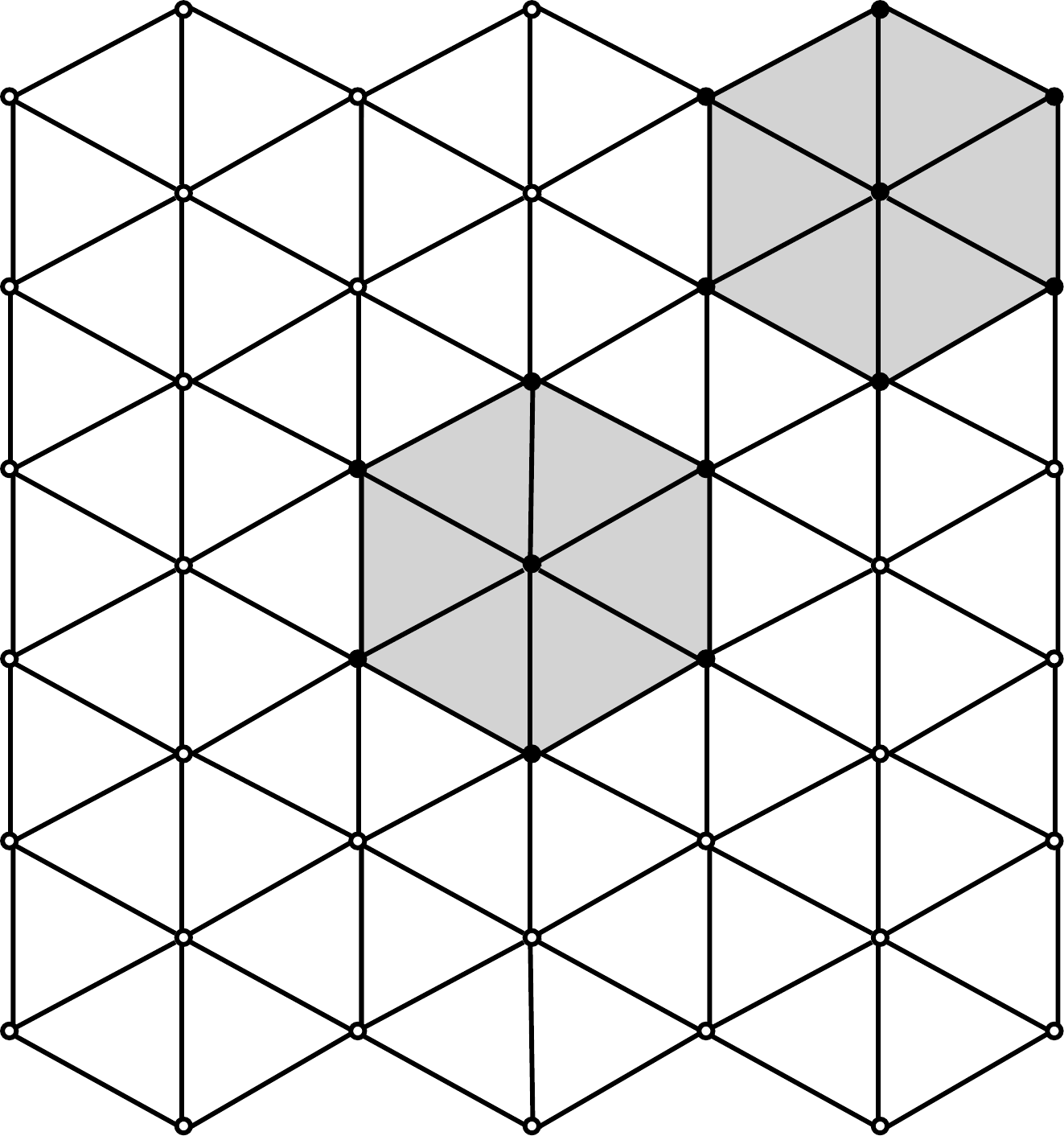}
	\caption{Region-appearance/-disappearance (1,2): Two regions appear/disappear with Betti numbers $\beta_0 = 0$, $\beta_1 = 0$, and $\beta_0 = 2$, $\beta_1 = 0$, respectively.}
	\label{ch3:events_start}
\end{figure}
\begin{figure}[t]
	\centering
	\includegraphics[width=30mm]{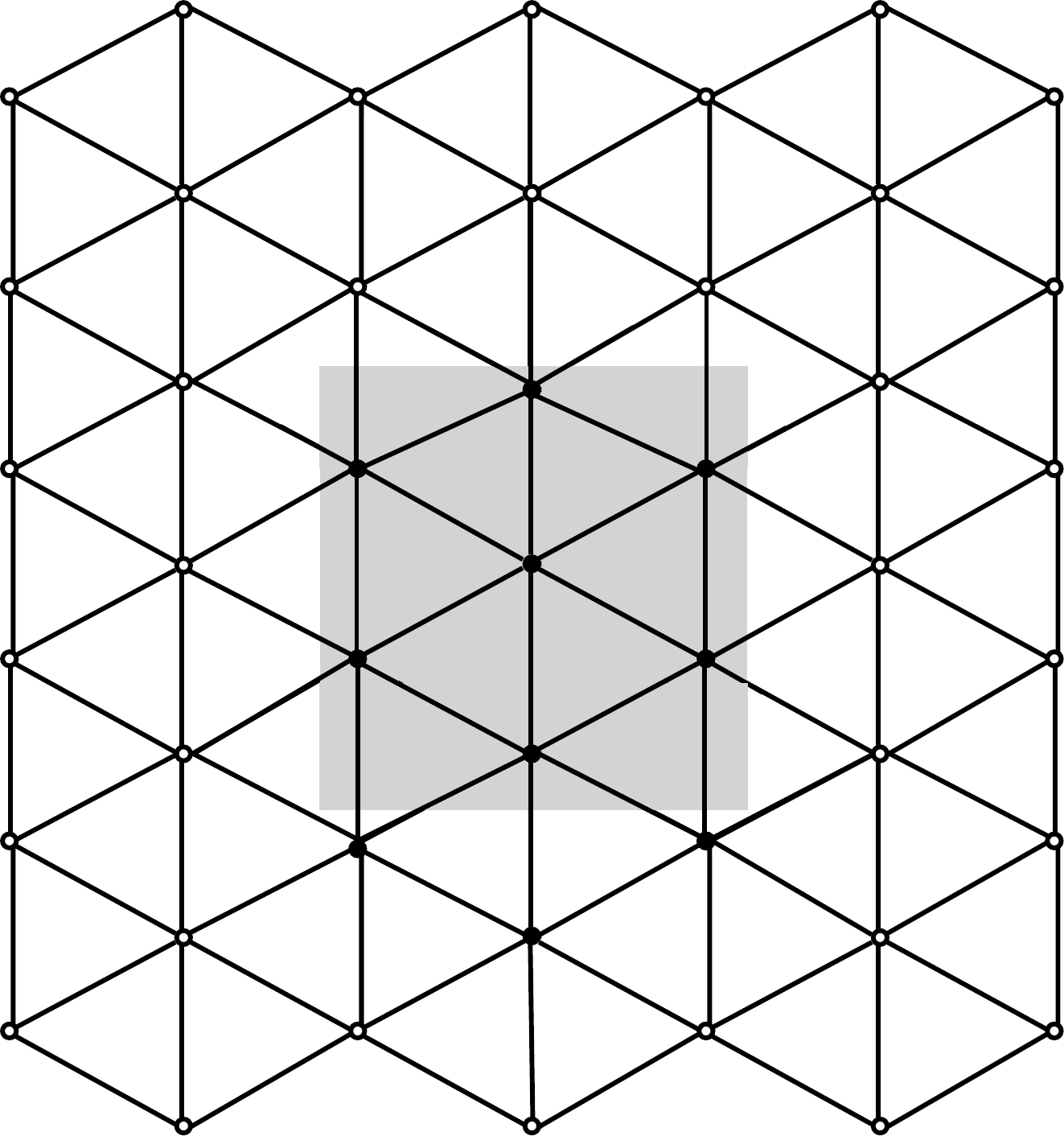}
	\begin{minipage}[b][8em][c]{.11\textwidth}
		\centering
		$\Delta\beta_0 = 0$\\
		$\Delta\beta_1 >= 1$\\
		{\fontsize{45}{45} $\rightleftarrows$}\\
		$\Delta\beta_0 = 0$\\
		$\Delta\beta_1 <= -1$\\
	\end{minipage}	
	\includegraphics[width=30mm]{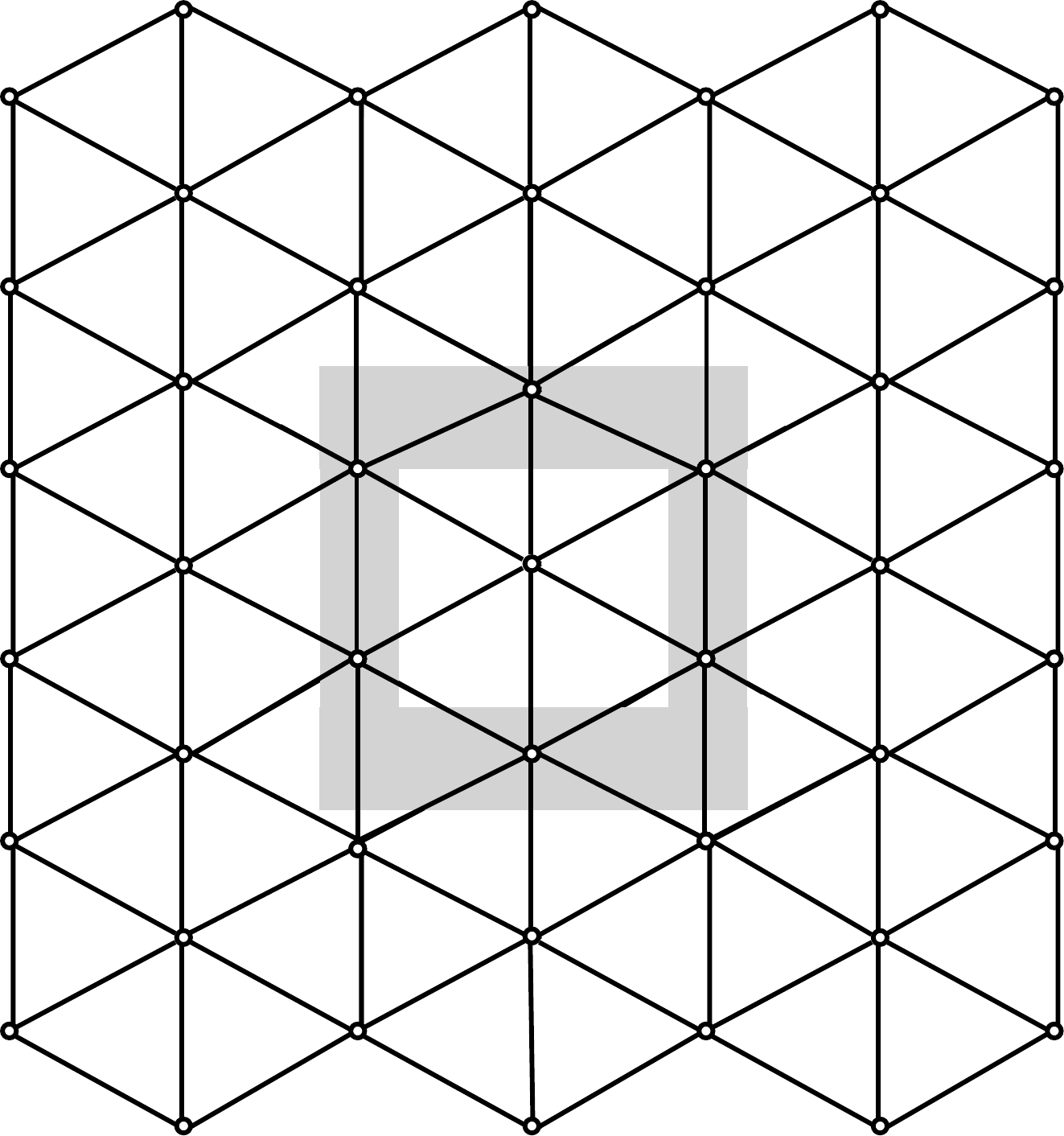}
	\caption{Hole-appearance/-disappearance (3,4): A forming/closing hole with Betti numbers $\beta_0 = 1$, $\beta_1 = 0$, and $\beta_0 = 1$, $\beta_1 = 1$, respectively.}
\end{figure}
\begin{figure}[t]
	\centering
	\includegraphics[width=30mm]{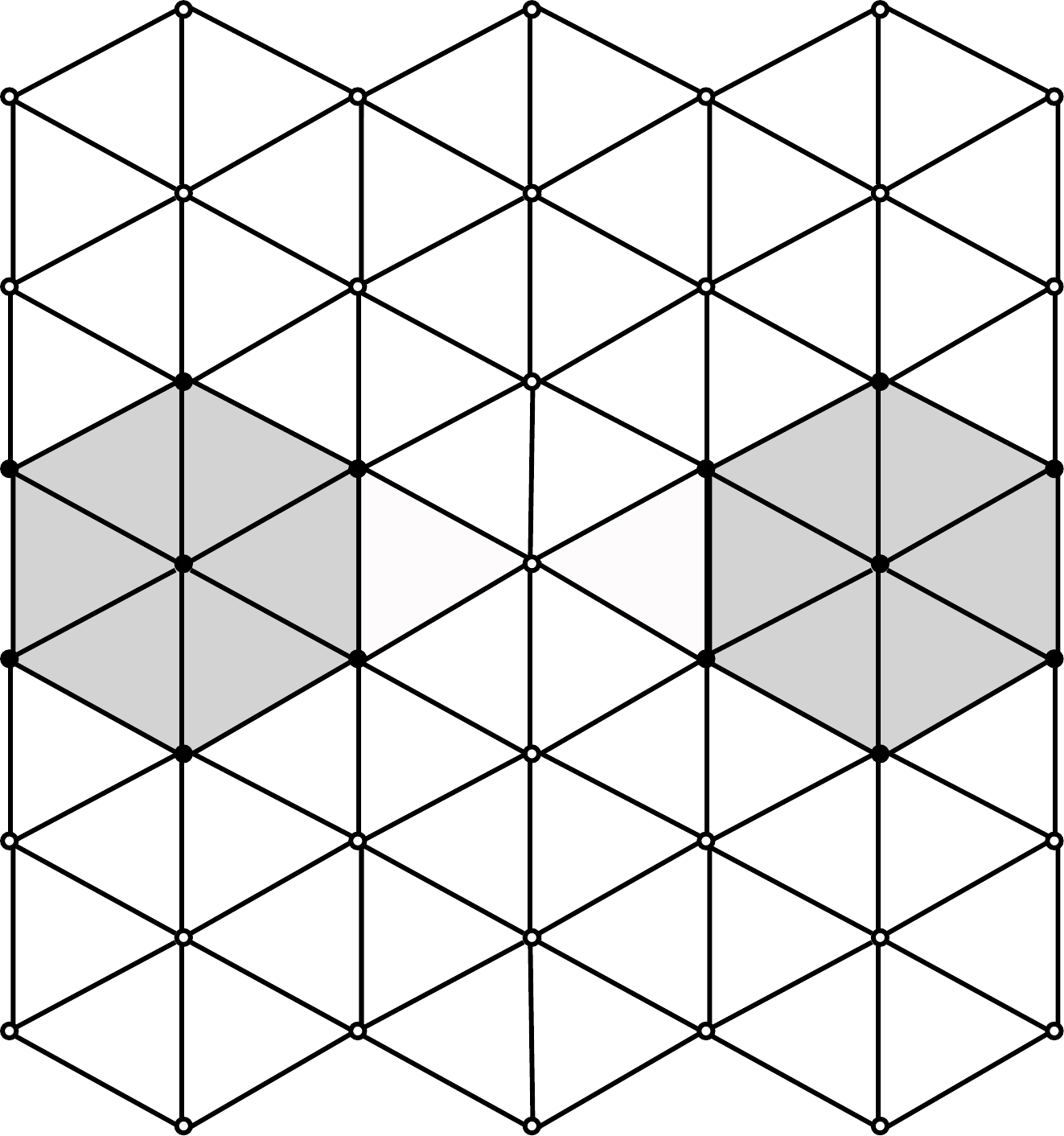}
	\begin{minipage}[b][8em][c]{.11\textwidth}
		\centering
		$\Delta\beta_0 <= -1$\\
		$\Delta\beta_1 = 0$\\
		{\fontsize{45}{45} $\rightleftarrows$}\\
		$\Delta\beta_0 >= 1$\\
		$\Delta\beta_1 = 0$\\
	\end{minipage}	
	\includegraphics[width=30mm]{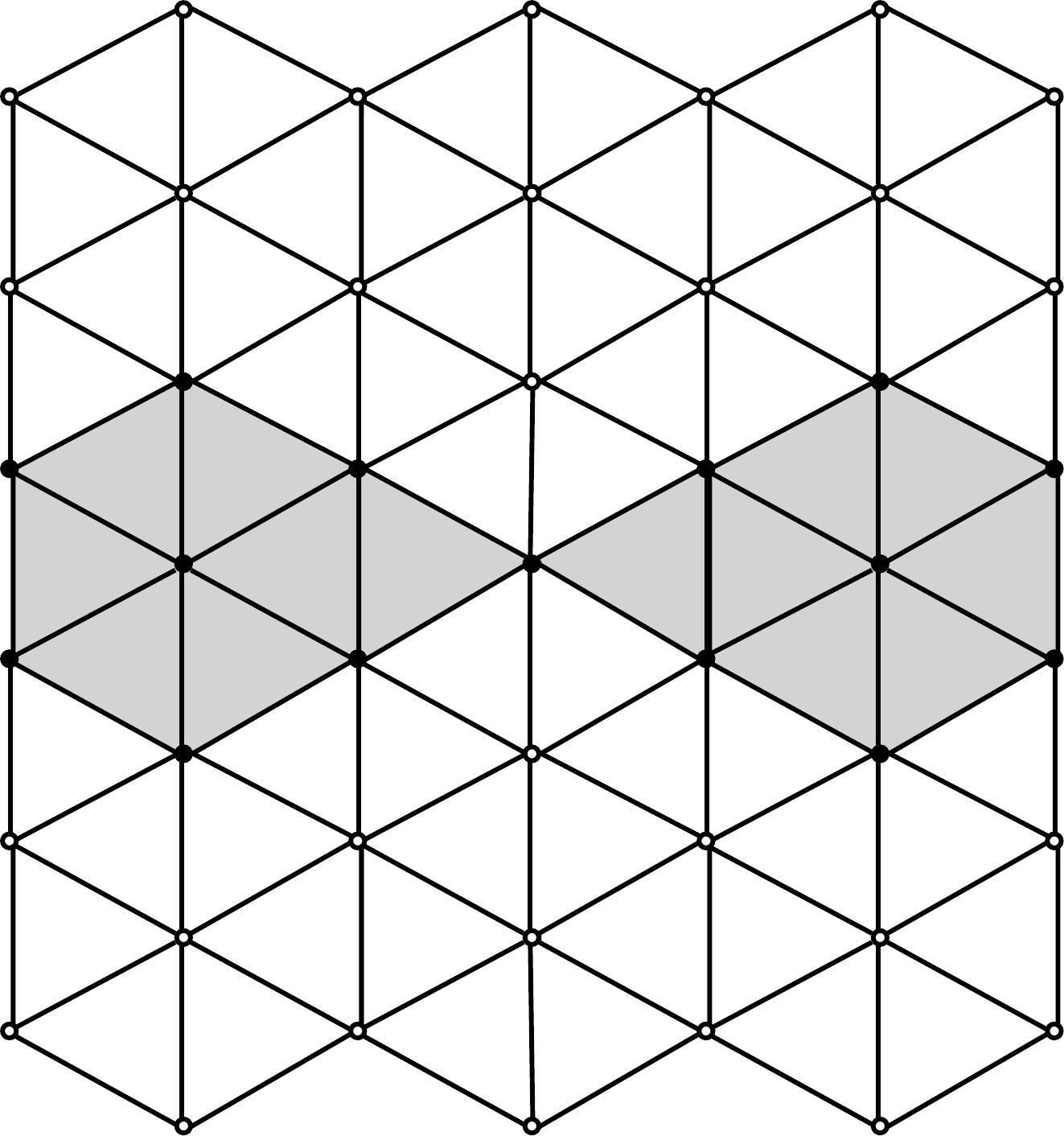}
	\caption{Region merge/split (5,6): Two regions merging/splitting with Betti numbers $\beta_0 = 2$, $\beta_1 = 0$ and $\beta_0 = 1$, $\beta_1 = 0$, respectively.}
\end{figure}
\begin{figure}[t]
	\centering
	\includegraphics[width=30mm]{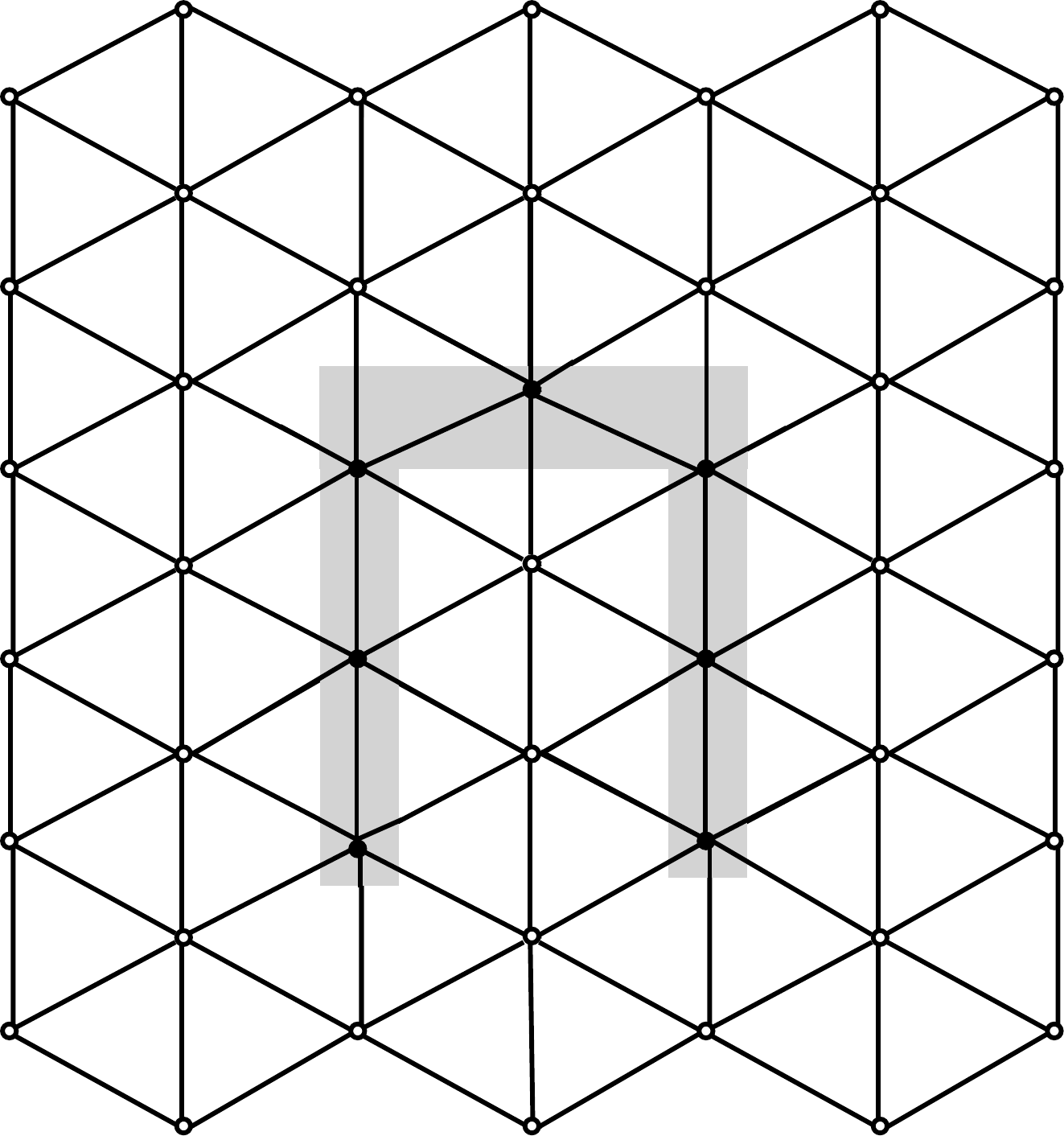}
	\begin{minipage}[b][8em][c]{.11\textwidth}
		\centering
		$\Delta\beta_0 = 0$\\
		$\Delta\beta_1 >= 1$\\
		{\fontsize{45}{45} $\rightleftarrows$}\\
		$\Delta\beta_0 = 0$\\
		$\Delta\beta_1 <= -1$\\
	\end{minipage}	
	\includegraphics[width=30mm]{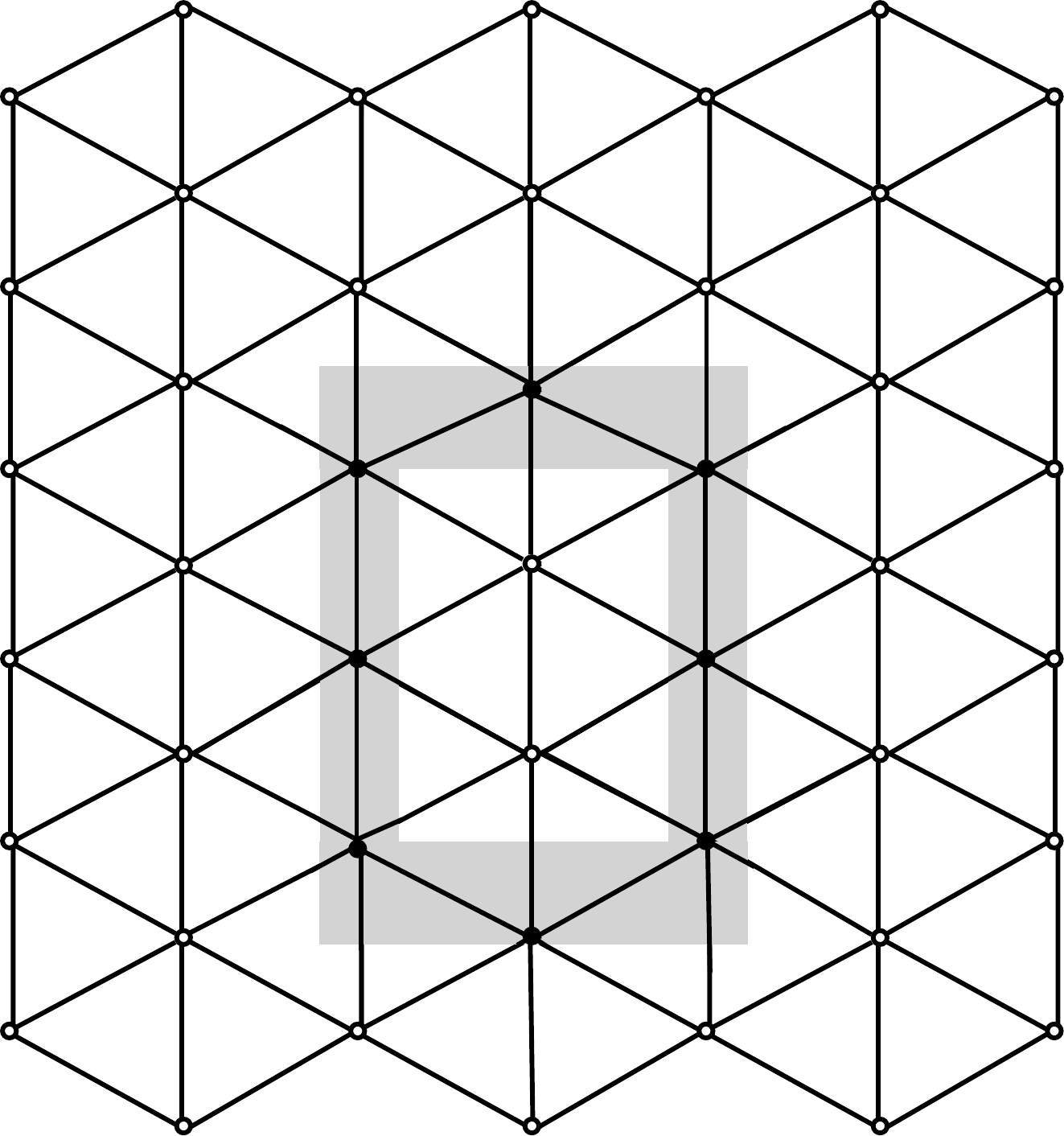}
	\caption{Region self-merge/-split (7,8): A self-merging/-splitting region with Betti numbers $\beta_0 = 1$, $\beta_1 = 0$ and $\beta_0 = 1$, $\beta_1 = 1$, respectively.}\label{ch3:events_end}
\end{figure}

\begin{table}[b]
	\centering
	\begin{tabular}{lllr} 
		\multicolumn{2}{r}{Positive Events} & \multicolumn{2}{l} {Negative Events}\\
		\cmidrule{2-3}                                    
		(1) & Region-Appearance   & Region-Disappearance & (2)\\
		(3) & Hole-Appearance     &  Hole-Disappearance & (4)\\
		(5) & Region-Merge        &   Region-Split & (6)\\   
		(7) & Region-Self-Merge   & Region-Self-Split & (8)\\   
		\hline
		(9) & \multicolumn{3}{c}{Topological Invariance}
		 \\
        \hline
	\end{tabular}
	\caption{Topological event types.}\label{t-events}                                                   
\end{table}

\subsection{Event Detection via Homology}\label{ch3:event_hom}

Basic topological event detection can be achieved by using homology.
To see this, define an {\em event node\/} to be a node in the triangulation $\Delta$ whose FI-value 
changes between two successive time steps.  Furthermore, assume that all event nodes' neighbors are 
non-event nodes.
A {\em positive\/} event occurs if an event node's FI-value $t_s$ 
exceeds $\theta$ ($\hat t_s=1$); otherwise, the event is {\em negative} ($\hat t_s=0$).
In case of a positive event, the corresponding event node is added to $\Delta_\theta$; 
otherwise, it is deleted. Event nodes with associated positive events
are called positive event nodes, and event nodes associated to negative events
are called negative event nodes.
\par
An {\em inner edge\/} is an edge in $\Delta$ linking an event node to one of its neighbors,
and an {\em outer edge\/} is an edge in $\Delta_\theta$ connecting two neighbors of an event node.
Formally, the sets of inner and outer edges of an event node $v\in V(\Delta)$ are defined respectively as follows,
\begin{eqnarray}
{\rm In}(v) &=& \{ \{v,w\}\in
E(\Delta) 
\mid w \in N_v(\Delta_\theta) \},
\end{eqnarray}
and
\begin{eqnarray}
\lefteqn{{\rm Out}(v) =}\\
&&\{ \{u,w\} \in E(\Delta_\theta) \mid u,w \in N_v(\Delta_\theta), u \neq w\}.\nonumber
\end{eqnarray}
The cardinalities of these sets represent the numbers of changed edges $e_{\rm new}=|{\rm In}(v)|$ 
and faces $f_{\rm new}=|{\rm Out}(v)|$ detected by an event node between two successive snapshots 
of the network. 
\par
With $\hat t_s$ capturing the addition/deletion of a vertex (i.e., the event's sign)
as $n_{new} = \hat t_s$, the component information can be updated by adding $n_{new},e_{new},f_{new}$
with appropriate signs to ($n_s,m_s,f_s$). Combined with the number of surrounding components 
$\beta_0$, the updated component data can be used to locally calculate Betti number differences
of the connected component to which the event node belongs.

\subsection{Cyclic Neighborhood Ring}\label{ch3:ring}

Event nodes may have insufficient information to correctly determine a topological event type. 
In case of a positive event the associated event node can 
determine its number of surrounding components $\beta_0$ using the component-IDs of 
neighboring nodes. However, in case of a negative event, this component-ID query will 
fail since all neighbors previously were part of the same component as the event node; 
i.e., all nodes will necessarily have the same component-IDs.
In order to compute the zeroth Betti number for negative events, we introduce an  
additional data structure which was used in \cite{top_event_ring} for event detection, 
the so-called neighbor-ring.
\par
The Whitney triangulation $\Delta$ defining a network's 
communication graph determines a cyclic ordering for each interior node's neighbors.
FI-values of all direct event node neighbors are collected in a list
called {\em neighbor-ring\/} 
and sorted by the event node's cyclic ordering. A continuous block of ones in the 
neighbor-ring represents a {\em ring component}; 
boundary nodes of $\Delta$, having non-cyclic neighbor-rings, always 
separate entries of ones
at the start/end of the list as two different ring components.
The number of ones in the neighbor-ring is equal to $e_{new}$, while 
the number of ring components $r_c$ allows the computation of $f_{new}= e_{new}-r_c$.
\par
Generally, the number of ring components can be assumed to be equal to 
the number of different connected components surrounding an event 
node, and can be used as a replacement for the zeroth Betti number when detecting negative events.
But this assumption only holds true for split events. Self-split events, seen from an event 
node's perspective, involve multiple ring components, yet only one region actually 
exists. Split and self-split events are therefore indistinguishable 
(Fig.~\ref{ch3:fig:ring}). Section~\ref{ch4:self_split} will provide a 
(partial) solution for the detection of self-split events.

\begin{figure}[b]
	\centering
	\includegraphics[width=35mm]{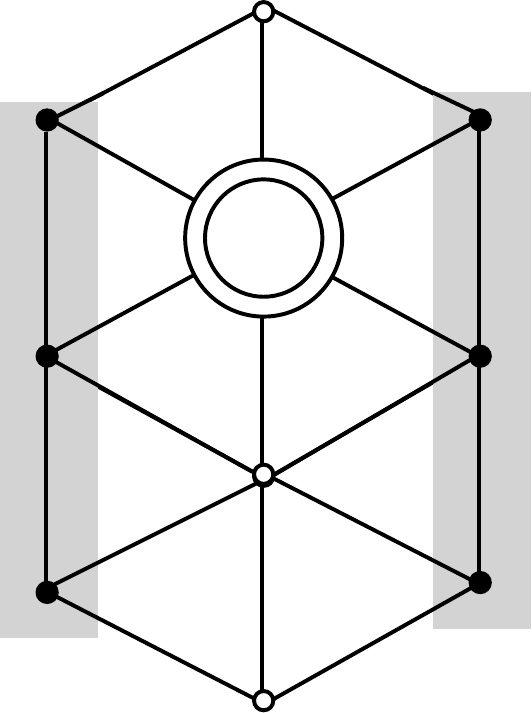}
	\hspace{5mm}
	\includegraphics[width=25mm]{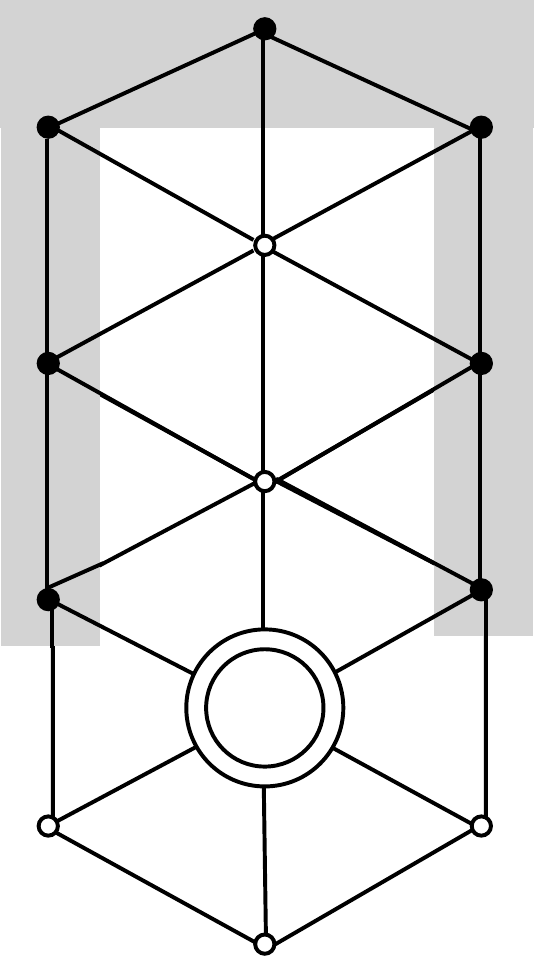}
	\caption{
		\footnotesize
		The triangulation to the left demonstrates the result of a split event.
		The marked event node has neighbor-ring $[1,1,0,1,1,0]$.
		The other triangulation demonstrates a self-split event; the event
		node's neighbor-ring is $[1,0,1,0,0,0]$.}
	\label{ch3:fig:ring}
\end{figure}

\subsection{Ring Query} \label{ch3:ring_query}
Event nodes query their neighbors to both update their neighbor-rings for event detection 
and attain the component data
of surrounding ring components necessary for region updates (Sect.~\ref{ch4:r_update}). 
As neighboring nodes belonging to the same ring component share their component
data, only one representative node per ring component has 
to be queried by an event node. For this, a {\em ring query\/} can be executed:
\par
At first the event node creates an \emph{event token} with its sensor-ID as content. This token is
passed to the neighbor associated with the first entry in the neighbor-ring.
When the queried neighbor has $\hat t_s=0$, the ring query is rejected and the next node in the
neighbor-ring is queried. Otherwise, the representative node reports its component information to 
the event node before starting two {\em query chains} - one for each neighbor
shared with the event node - in which the event token is passed around the ring of event node neighbors 
simultaneously in two directions. 
Event tokens are passed as long as the receiving sensors have FI-values 
of one. These tokens serve to detect when the cyclic ring of neighbors is completed; both query 
chains will eventually reach a node already possessing an event token. When a token cannot be passed
any further, i.e., the next node rejects the token ($\hat t_s=0$) or the cycle is completed,  
the corresponding nodes report back the so-called \emph{chain ends} to the event node, ending 
the query of one ring component.
A chain end consists of the sensor-IDs of the last two nodes in a query chain. 
The two received chain ends are used by the event node 
to determine which entries in the neighbor-ring must be set to one for the previously queried ring component. 
These are precisely the neighbor-ring entries inbetween the entries associated to the two chain ends 
by the event node's cyclic order.
\par
This process is repeated for the next ring component, whose first queried sensor is located at least 
two positions behind the last queried ring component in the neighbor-ring. The ring query ends when all
neighbors were queried directly by the event node, or indirectly through a ring component
query (Fig.~\ref{f-ring-query}).
\begin{figure}[t]
	\centering
	\includegraphics[width=60mm]{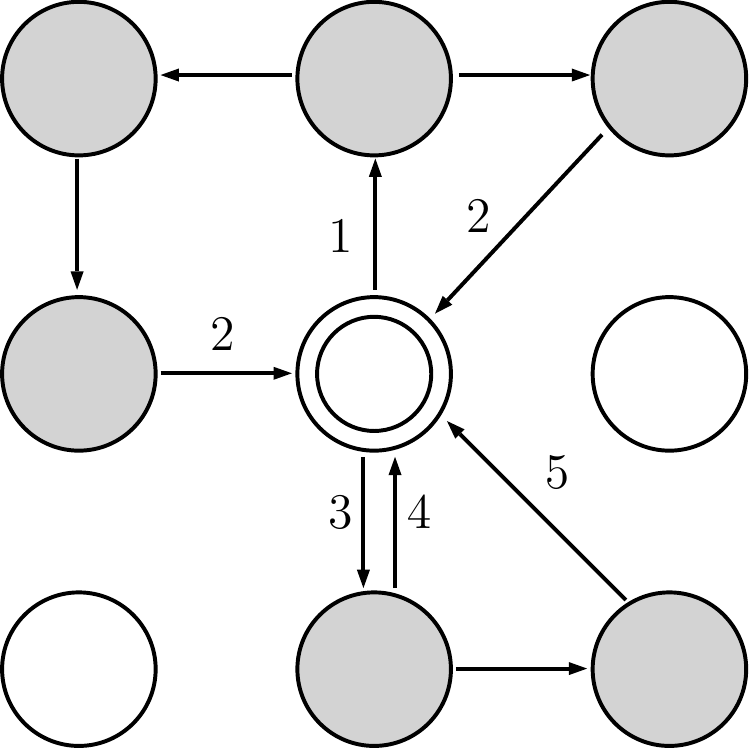}
	\caption{A ring query with two ring components. Numbers indicate the temporal 
	order of events; 
	neighbor-ring: $[1_1,1_2,1_3,0_4,1_5,1_6,0_7,1_8]$,\\
	chain ends: \{(8,1),(3,2)\},\{(-,5),(6,5)\}}\label{f-ring-query}
\end{figure}

\subsection{Event Decision Diagram}\label{ch3:event_tree}

The event decision diagram (Fig.~\ref{ch3:fig:dec}) illustrates the decision process of an event node 
for topological event detection. Instead of directly computing Betti number differences, an event node's neighbor-ring is used to infer
the differences and distinguish event types. This approach allows event detection independent 
of the component numbers. Additionally, the actual Betti number differences can be computed and compared
with the inferred values; conflicting values indicate erroneous component data.
\par
Region-Appearance/-Disappearance (1,2) and Hole-Appearance/-Disappearance (3,4) events can be inferred directly from the cyclic neighbor-ring. 
In these cases, the whole neighbor-ring is either a sequence of ones or a sequence of zeros - boundary nodes, having non-cyclic neighbor-rings, cannot detect the events (3,4). 
A self-merge (7) event can be identified when the event node detects multiple ring components but only receives one component-ID from all of its neighbors.
Similarly, merge events (5) are identified when not only multiple ring components but also multiple different component-IDs are received. In particular, a positive $\Delta\beta_1$ value at the merge node implies that a combined merge/self-merge event has happened.
And split events (6), which are indistinguishable (Sect. \ref{ch3:ring}) from self-split events (8), are
identified when the event node is surrounded by multiple ring components.
Each detectable event type in the diagram is annotated with the corresponding Betti number differences. 
All with inequalities listed event types require actual Betti number computations to
determine the precise numbers of appearing/vanishing regions/holes.
\begin{figure}[t]
	\centering
	\includegraphics[width=80mm]{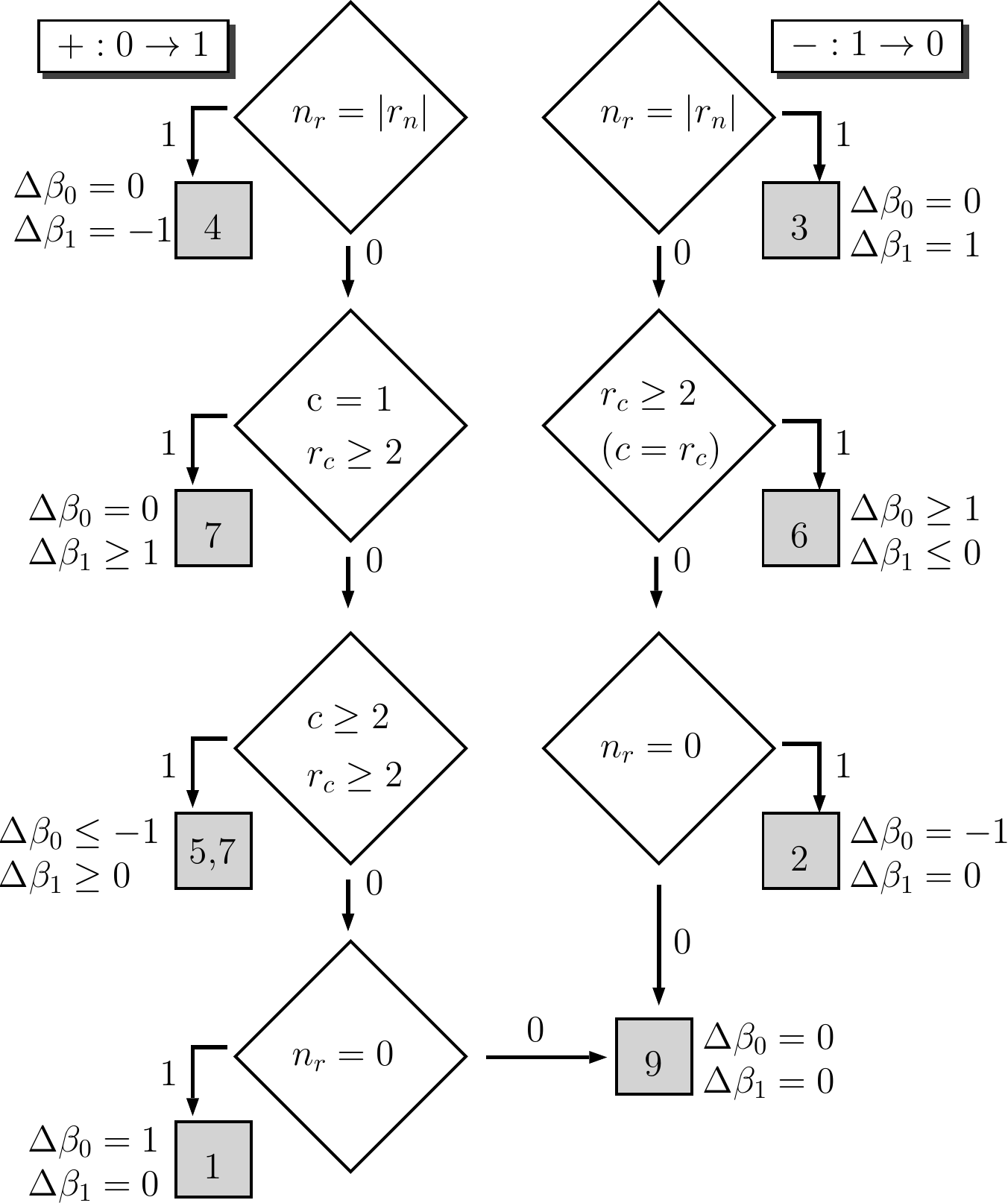}
	\caption{
		\footnotesize
		A diagram illustrating an event node's decision process
		for event detection
($r_n$: neighbor-ring, $n_r$: number of ones in $r_n$, $c$: number of component-IDs, $r_c$: number of ring components).
		Positive events are listed to the left and negative events 
		are found to the right. The numbers in the gray-marked event boxes represent
		event types corresponding to the ones listed in Tab.~\ref{t-events}.
	}\label{ch3:fig:dec}
\end{figure}

\section{Distributed Event Monitoring}\label{ch4}

The previously introduced event detection model (Sect.~\ref{ch3:event_hom})
implicitly relied on the assumption of all network nodes but one event node having stable 
FI readings. By itself this approach is only applicable for incremental event detection. 
The following section will extend this model with a \textit{region update} process 
allowing for the distributed monitoring of non-incremental events.

\subsection{Region Updates}\label{ch4:r_update}

During the process of monitoring an areal object for topological changes, regular updates of 
the monitoring network's sensor data are necessary. 
Let $S_{t}$ denote the network's state detected at time $t$:
It is assumed that all sensors periodically update their FI-values at the start of each sample
time interval $(t,t')$, where $t'>t$. For this, all sensors are assumed to have globally synchronized
clocks. 
\par
Sensors which detect a change in their FI-values, i.e., event nodes, determine the topological
event types and initiate \textit{region updates}.
In response to detected topological events, each node residing inside a connected component of $\Delta_\theta$
must update its component information $[n_s,m_s,f_s]$, as well as its component-ID $c_s$.
To this end component number differences between $S_t$ and the previous state $S_{t-1}$
are distributed by event nodes into their surrounding components, changing the network's state from $S_{t-1}$ 
to $S_{t}$.
\par
During rinq queries event nodes collect component data $[n_c,m_c,f_c]$ from each of their surrounding 
components $c$ (Sect.~\ref{ch3:ring_query}). 
Depending on the detected event type, either the event node's update list $[n_{new},e_{new},f_{new}]$
(Sect.~\ref{ch3:event_hom}), or the queried components' data are used to create lists of component number 
differences for each surrounding component. 

\subsubsection{Region Update Messages}
Region updates are achieved by spreading \textit{region update messages} from event nodes
into regions of $\Delta_\theta$. 
We consider a simplified communication model:
Update messages are spread unidirectionally by each component node starting at the event nodes.
An \textit{update node} is a network node which is receiving update messages.
Update nodes pass received event messages to all of their neighbors.
Zero nodes ($\hat t_s=0$) neither process nor pass received update messages.
Furthermore, we assume that the communication is without errors.
$$\begin{array}{ll}
\mbox{\tt event-ID}: & [\pi_e,i_e]\\
\mbox{\tt update}:   & [n_e,m_e,f_e,k_e,c_e]
\end{array}$$
Region update messages consist of unique event-IDs as message headers, 
and lists of component number differences as update message contents. 
An {\em event-ID\/} contains two parts:
\par
The sensor-ID $\pi_e$ of the event node and the {\em event number\/} $i_e$; each sensor counts 
its number of detected events and assigns each created update message an appropriate event number. 
This definition of an event-ID guarantees that no network node applies the same update more than once; 
update nodes store received event-IDs and reject known updates: $[\pi_{e_1},i_{e_1}] \neq [\pi_{e_2},i_{e_2}]$.
Event nodes store their created event-IDs before sending update messages, preventing them from
processing their own update messages. An update message's contents consist of a list of network numbers to update.
These are the component number differences which each update node adds to its
component information to apply the update. Component numbers are followed by a 
\textit{target component-ID} $k_e$ together with a new component-ID $c_e$
if the corresponding event concerns more than one network component (e.g., merge events). 
New component-IDs are always processed, even if the event-ID
is known (Sect.~\ref{ch4:self_split}).
\par
Component-IDs consist of sensor-IDs and event-IDs of the creating event nodes, i.e, 
$c_e=[\pi_s,(\pi_e,i_e)]$ (from here on only $\pi_s$ is used when referring to component-IDs).
Sensors store component-ID values for both the current and the previous sample interval. 
The latter values are stable during the current sample interval and can be compared against the 
target-ID values of update messages; they indicate intended regions for region update messages. Region 
updates then can be selectively applied only to nodes which were part of the component $k_e$ during the 
previous network state $S_{t-1}$.
\par
There exist three different update message types: Normal updates messages are sent following topological events
which involve exactly one region. Merge update messages are sent into each component connected via
the event node after a merge event. And split update messages are sent into event nodes' surrounding
components after the detection of split events.
In the following the creation of these update message types by event nodes and the subsequent region
update process will be described in detail (Sect.~\ref{ch4:normal_update}--\ref{ch4:split_update}).

\begin{algorithm}[t]
	\begin{algorithmic}[1]
		\footnotesize
		\Function{NormalUpdate}{int $r$}
		\If{${\rm FI}=1$}
		\State $m_e$ $\gets$ $|{\rm In}(e)|$;
		\State $f_e$ $\gets$ $|{\rm Out}(e)|$;
		\Else
		\State $e_e$  $\gets$ $-|{\rm In}(e)|$;
		\State $f_e$ $\gets$ $-|{\rm Out}(e)|$;		
		\EndIf
		\State \Call {spreadUpdate}{$r$, $i_s$, [$({\rm FI}\;? \; 1 \; : \; -1)$, $m_e$,$f_e$]};
		\State $i_e \gets i_e + 1$;
		\EndFunction
	\end{algorithmic}
	\caption{Normal updates are spread through one representative node into a component.}\label{ch4:alg:normal_update}
\end{algorithm}

\subsubsection{Normal Update} \label{ch4:normal_update}
Updates resulting from hole-appearance/-disappearance, region-appearance/-disappearance, and topological 
invariant events fall under the category of normal updates. 
In all aforementioned cases the event node is surrounded by at most one fire region. 
Therefore, it is sufficient to transmit a region update message to one of the event node's
neighbors $r$ lying inside $\Delta_\theta$ (Alg.~\ref{ch4:alg:normal_update}).
\par
The component number difference list for a \textit{normal update message} is created by
multiplying the event node's update list $[n_{new},e_{new},f_{new}]$ with the event's sign. The
normal region update then is applied by adding these values to the component information 
$[n_s,m_s,f_s]$ at each node processing the update message. In case that no surrounding region exists, 
the event node updates its component-ID to its sensor-ID (region-appearance), or skips the region update
process altogether (region-disappearance). Otherwise, the component-ID of the surrounding region is
assumed by the event node; no new component-ID has to be added to the region update message. 

\begin{lemma}\label{ch4:lemma:normal_update}
Normal updates correctly update regions surrounding event nodes, independent on whether other events
occur concurrently in these regions. 
\end{lemma}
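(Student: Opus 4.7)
The plan is to decompose correctness into three claims and verify them in order: (i) the component number differences carried by a normal update message faithfully describe the local change produced at the event node; (ii) every node in the (at most one) surrounding region applies the message exactly once, regardless of flooding order; and (iii) the final component information at every such node is insensitive to the arrival order of concurrent updates from other events in the same region.

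First I would establish local correctness at the event node. By the definitions in Section~\ref{ch3:event_hom}, $e_{new}=|{\rm In}(v)|$ and $f_{new}=|{\rm Out}(v)|$ are exactly the numbers of edges and triangles added to or removed from $\Delta_\theta$ when the event node $v$ is inserted or deleted, respectively; together with $n_{new}=\hat t_v$, the signed triple transmitted by Algorithm~\ref{ch4:alg:normal_update} is precisely the true change in $(n,m,f)$ for the unique surrounding region. Next, exactly-once application follows from the design of event-IDs: every $[\pi_e,i_e]$ is globally unique because sensor-IDs are unique and the event counter $i_e$ is strictly incremented per event, update nodes cache received event-IDs and reject duplicates, zero nodes neither process nor forward the message, and the event node stores its own event-ID before transmission so as not to re-apply its own update. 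Combined with the unidirectional flooding to all neighbors, every non-zero node reachable from the event node in $\Delta_\theta$ therefore processes the message exactly once.

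The main obstacle is step (iii), concurrent-event independence. Since applying a normal update at an update node consists solely of coordinatewise integer addition of a fixed signed triple to $(n_s,m_s,f_s)$, and since such addition in $\mathbb{Z}^3$ is commutative and associative, the resulting tuple depends only on the multiset of applied updates, not on their order. By step (ii), this multiset is identical at every node of the region, namely the set of concurrent normal events whose messages reach the region during the sample interval. The previous-interval component-IDs remain frozen during the current interval (Section~\ref{ch4:r_update}), so the set of intended recipients of each individual update is well defined even while the region's topology is being reshaped by the concurrent events.

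Finally I would observe that normal updates neither insert a new component-ID nor carry a target-ID field, so merge and split updates triggered by other concurrent events act on an orthogonal portion of each node's state; they can neither corrupt a normal update's arithmetic on $(n_s,m_s,f_s)$ nor be corrupted by one. Composing these observations establishes that the state reached at every node of the surrounding region equals the one predicted by summing, in any order, the true local changes of all concurrent normal events, which is precisely the claim of Lemma~\ref{ch4:lemma:normal_update}.
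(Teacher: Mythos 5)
Your proof is correct and follows essentially the same route as the paper's: flooding delivers the message to every node of the (possibly concurrently enlarged) surrounding component, event-IDs enforce exactly-once application, and commutativity of adding signed difference triples makes the outcome independent of processing order; the paper's own proof is just a terser version of this, adding only the explicit remark that concurrently merged regions receive the normal update by its passing through merge event nodes. One small imprecision in your final step: merge updates are not ``orthogonal'' to the counters $(n_s,m_s,f_s)$ --- they add their own difference triples to the very same fields --- but this does not affect the conclusion, since the same commutativity argument covers their interleaving with normal updates.
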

\begin{proof}
As update messages are passed to each neighbor inside the component, every node
of the component surrounding the event node will receive the normal update message.
Concurrently merged regions receive normal updates via merge event nodes.   
\par
Update messages from other event nodes cause no data conflicts. Independent on the 
processing order at each component node, the overall sum of component number differences
will reflect all occurred events. When each node has processed 
the normal update, the region is updated in response to
the associated event.
\end{proof}

\begin{algorithm}[t]
	\begin{algorithmic}[1]
		\footnotesize
		\Function{MergeUpdate}{list$<$int$>$ compList, list$<$int$>$ ringNodes}
		\State $\mathtt{add} \; compList \; \mathtt{to:} \; merge\_tokens$;\label{ch4:alg:merge_tokens}
		\State $eN \gets i_e$;
		\State $i_s \gets i_e + 1$;
		\For {$r \in {\rm ringNodes}$} 
		\For {$(c \in {\rm compList} \; | \; c \neq c_r)$}\label{ch4:alg:comp}
		\State\Call{spreadUpdate}{r,$i_e$,[$n_c,m_c,f_c,c_r,\pi_e$]};
		\State $i_s \gets i_s + 1$;
		\EndFor
		\State \Call {spreadUpdate}{r,eN,[1,$|{\rm In}(e)|$,$|{\rm Out}(e)|$]};
		\EndFor
		\EndFunction
	\end{algorithmic}
	\caption{Merge update messages are spread into each component; 
		one message per component $c$ merged into the ring component's region $r_c$ plus 
		an additional normal update message for the event node's numbers.} \label{ch4:alg:merge_update}
\end{algorithm}

\subsubsection{Merge Update} \label{ch4:merge_update}
After the detection of a merge event, the event node sends \textit{merge update messages}
into each surrounding component. For each component $c$ merged into the new merged region,
the component numbers of all other merged components must be added to all component nodes.
Therefore the component information $[n_c,m_c,f_c]$ is used as component number difference list
for merge update messages. Additionally, the data ($\pm1$, $e_{new}$,$f_{new}$) directly added
by a merge event node to the merged region must be added to all surrounding components' numbers. 
Each merge event node transmits one additional normal update message with this data into each component
(Alg. \ref{ch4:alg:merge_update}).
\par
Merge update messages contain the IDs $k_e,c_e$ in addition to the 
component number differences contained in normal updates messages. 
A merge update message is processed only if it originated from a different region 
($c_u \neq k_e$) than the processing update node~$u$. 
In addition, a sensor maintains a list of merged component-IDs, the \textit{merge tokens} 
list, to distinguish which components were already merged during the current sample interval.
Only merge updates with target IDs not already contained in the merge tokens list
are processed. Event nodes detecting merge events fill their merge tokens lists with 
IDs of all their surrounding components (line~\ref{ch4:alg:merge_tokens} of Alg.~\ref{ch4:alg:merge_update}). 
\par
Having different component-IDs before the merge event, a new component-ID $c_e$ 
for the merged regions after the event is required. The event node's sensor ID is used for that purpose:
For each component surrounding the event node, the component-ID $c_e$ of the first split or merge 
update message to reach an update node is used as the new component-ID. Otherwise only a lower 
value is accepted as a new component-ID. Components after merge/split events will assume the lowest
ID of all update messages sent by participating merge/split event nodes in the region. 
\par
Although the components surrounding an event node can be differentiated during merge 
events, merge update messages are spread into each ring component via representative nodes
$r$ to guarantee that each node in the region receives all updates.
For instance, when a combined merge/self-merge event occurs, a self-split 
event could happen concurrently in the same region. Would only one representative node of a component 
be informed of this event, at least one part of the new merged region would not receive the 
merge update message.

\begin{figure}[h]
	\centering
	\includegraphics[width=75mm]{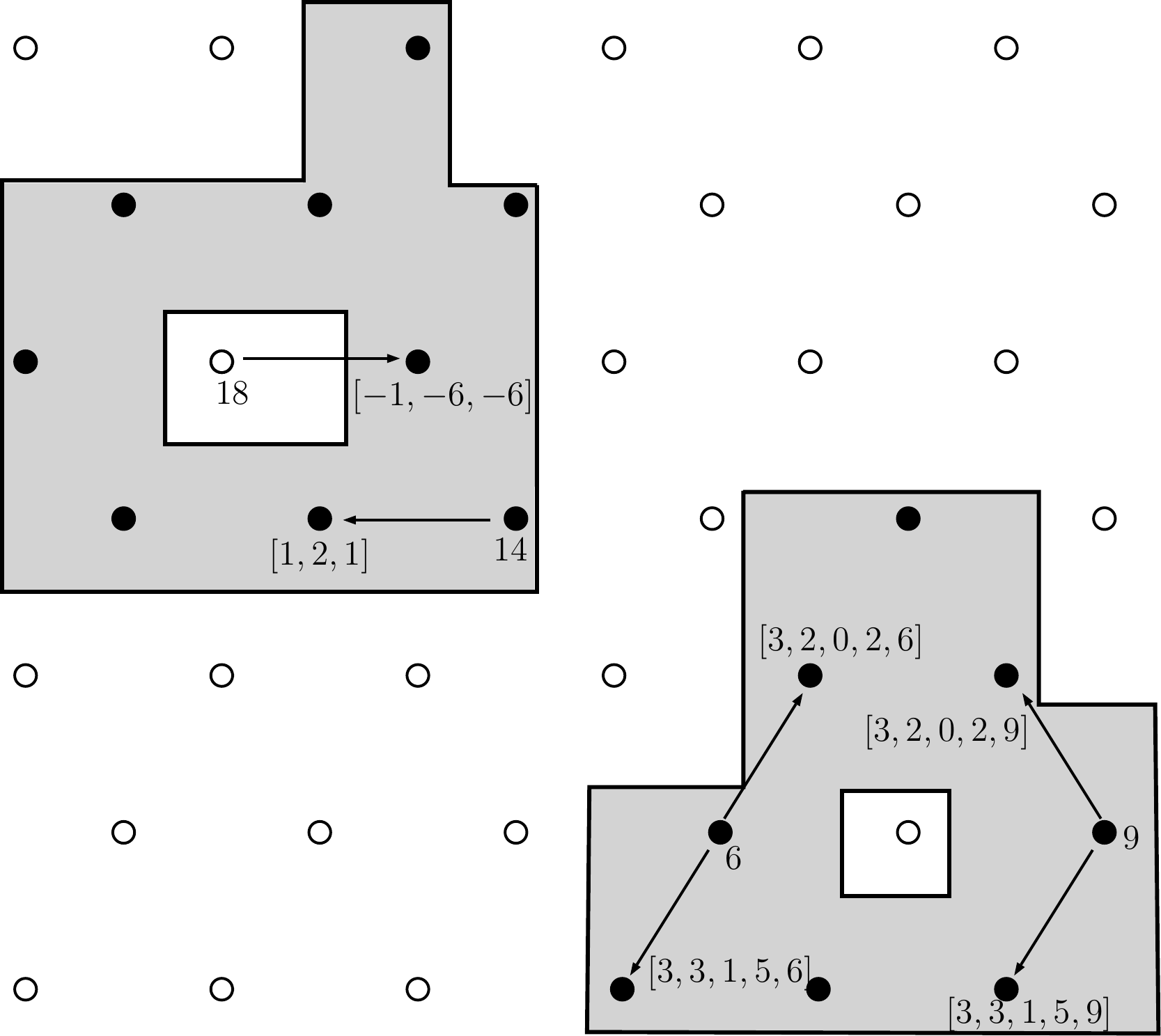}
	\caption{A sensor network with four concurrently detected events. 
		The event nodes 14 and 18 spread \textit{normal update} messages, while the nodes 6 and 9 spread \textit{merge update} messages into the regions 2 and 5.}\label{ch4:fig:normal}
\end{figure}

\begin{lemma}\label{ch4:lemma:merge_update}
Merge updates correctly update regions after merge events,
independent on whether other events occur concurrently inside the merged region. 
\end{lemma}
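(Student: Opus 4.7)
The plan is to mirror the structure of Lemma \ref{ch4:lemma:normal_update} but with extra bookkeeping for the multi-component nature of merges, treating the claim as two sub-claims: (i) every node of the new merged region receives the right set of component number differences exactly once, and (ii) all such nodes converge to the same, correct new component-ID $c_e$.

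First I would fix notation: let $e$ be the merge event node, let $C_1,\dots,C_k$ be the distinct components incident to $e$ at time $t-1$ (identified by their target-IDs $k_e$), and let $R = \{e\} \cup \bigcup_i C_i$ be the resulting merged region. For (i), I would argue by an invariant on update nodes. Because messages are spread via one representative per ring component but then passed by each update node to all neighbors inside $\Delta_\theta$ (exactly as in the normal-update argument), every node $u \in C_i$ eventually receives the merge update with target $k_e = c_u^{(t-1)}$ carrying the numbers of each other $C_j$, $j \ne i$, plus the auxiliary normal update carrying $(1,|\mathrm{In}(e)|,|\mathrm{Out}(e)|)$. Rejection of updates with $k_e \ne c_u^{(t-1)}$ ensures a node in $C_i$ does not absorb numbers meant for nodes in other $C_j$, and the merge-tokens mechanism (line \ref{ch4:alg:merge_tokens}) ensures that when multiple merge event nodes fire concurrently in the same region, each source component is counted at most once per sample interval. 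Comparing the sum a node in $C_i$ accumulates,
\begin{equation}
[n_s,m_s,f_s] \;+\; \sum_{j \ne i} [n_{C_j},m_{C_j},f_{C_j}] \;+\; (1,|\mathrm{In}(e)|,|\mathrm{Out}(e)|),
\end{equation}
against the true numbers of $R$ at time $t$, shows the component information is correctly updated. Concurrent non-merge events in $R$ are handled exactly as in Lemma \ref{ch4:lemma:normal_update}: their normal updates propagate through $R$ independently, and since addition of integer differences is commutative and event-IDs prevent double application, the arrival order is irrelevant.

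For (ii), I would use the tie-breaking rule that each node adopts the $c_e$ of the first merge/split update it processes and thereafter only accepts a strictly smaller one. Since this rule is a pointwise minimum over a finite set of IDs contributed by all merge/split event nodes firing in $R$ during the interval, and every node in $R$ sees the same set of contributing event-IDs (by the spreading argument of (i) applied to each merge event node), all nodes of $R$ converge to the same minimal $c_e$ regardless of message order.

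The main obstacle I anticipate is case (i) in the presence of a concurrent combined merge/self-merge event: there one must verify that, because the merge event node spreads into \emph{every} ring component via a representative (rather than into just one), the self-merge-induced partition of $C_i$ still has each of its pieces reached by the merge update, so no sub-region is left with stale numbers. I would address this by observing that the representatives are chosen one per ring component of the event node and every later self-split or self-merge within $C_i$ still leaves each piece containing at least one such representative, which, combined with the unidirectional flooding inside $\Delta_\theta$, guarantees coverage.
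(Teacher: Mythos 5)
Your proposal is correct and follows essentially the same route as the paper's own proof: the merged region's numbers are obtained as the sum of all incident components' numbers plus the event node's contribution $(\pm 1, e_{\rm new}, f_{\rm new})$, flooding via one representative per ring component guarantees coverage, and the target-ID/merge-token deduplication prevents double application when several merge event nodes fire concurrently. Your additional sub-claim (ii) on convergence to the minimal component-ID and your explicit treatment of the combined merge/self-merge coverage are not in the paper's proof itself but merely formalize mechanisms the paper describes in the surrounding text of Section~\ref{ch4:merge_update}, so they constitute a more careful write-up of the same argument rather than a different one.
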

\begin{proof}
The sums $[\sum_c n_c,\sum_c m_c,\sum_c f_c]$ of component numbers of each merged
component $c$, with the addition of the merge event node's update numbers 
($\pm1$, $e_{new}$,$f_{new}$), represent the component numbers for a merged region.
Merge updates apply these numbers at each node of a merged region. Regions concurrently merged
at other merge event nodes either add to the overall component sums, or represent regions
which are already part of the component sums. Target component-IDs $k_e$ allow for
the distinction between different merge events of the same region:
\par
Two regions can merge at different points at the same time. Without target-IDs the merge event nodes 
would send the same merge update with different event-IDs, and the same merge update would be 
added more than once at each component node. To avoid this possible conflict target-IDs are sent by 
event nodes and saved by each node receiving update messages. Only one merge update message per 
target-ID is applied in one sample interval at each update node.
\end{proof} 

\subsubsection{Split Update}\label{ch4:split_update}
Updates resulting from split, self-split and self-merge events all fall under the category of 
split updates. Self-merge updates, although categorized under split updates, are applied by
sending normal update messages. Contrary to normal updates, self-merge updates must be 
transmitted into each ring component of an event node and are therefore handled 
together with split messages.
\par
After a split event no information on the number of
lost nodes, edges or faces of the split regions is available. 
Instead of transmitting differences to update the component numbers, 
a complete recomputation of the split components is necessary. 
\textit{Split update messages} are sent to one representative node per ring 
component (Alg.~\ref{ch4:alg:split_update}). The reasoning is the same as for merge updates; concurrent
events can cause ring components to become disconnected.
Split updates messages, like merge update messages, contain two component-IDs. 
One for the new component-ID $c_e$ after the split, and one to indicate the targeted region $k_e$. 
The target component-ID serves to indicate which nodes are part of the split
regions. Split updates are executed in two phases: 
\par
In the first phase, the split event node transmits split update messages which contain 
the component numbers of the component previously surrounding the event node with negative signs. 
Each node $u$ receiving a split update message adds these numbers to its component numbers.
Additionally, the split updates' target-IDs are added to the \textit{split tokens} lists of each node.
During one sample interval only one split update message per target-ID is processed, 
guaranteeing that each split component's numbers are subtracted only once when multiple split
event nodes exist in the same region. 
\par
In the second phase, split update nodes $u$ send \textit{split update event 
messages} to recompute the split components. Split update event messages are only 
created by nodes which lie inside the region previously surrounding the event node; i.e., $c_u = k_e$. 
A split update event message is nothing but a normal update message with the following contents:
$$\begin{array}{ll}
\mbox{\tt event-ID}: & [\pi_e,i_e]\\
\mbox{\tt update}:   & [1,e_{\rm new}/2,f_{\rm new}/3]
\end{array}$$
For the computation of $e_{\rm new},f_{\rm new}$ additional ring queries at
each split update node are necessary.
\par
Two special cases must be accounted for when recomputing split regions: 
Neighboring positive event nodes must not be counted for
inner and outer edge numbers, they already sent
these numbers via own event update messages into
the component. Neighboring negative normal event nodes (not split event nodes)
send their inner/outer edge numbers as negative component difference numbers 
into the component. Therefore each split update event
node increments its update numbers with neighboring
negative event nodes' inner/outer edge numbers, and decrements
with positive event nodes' inner/outer edge numbers. Algorithm 4
describes the split update event process in detail.

\begin{algorithm}[b]
	\begin{algorithmic}[1]
		\footnotesize
		\Function{SplitUpdate}{list$<$int$>$ ringNodes, bool isSelfMerge}
		\For {$r \in {\rm ringNodes}$}
		\If {$\neg {\rm isSelfMerge}$}
		\State 
		\Call{spreadUpdate}{r,$i_e$,[-$n_e$,-$m_e$,
			-$f_e$,$c_e$,r]};\label{ch4:alg:split_update:ID}
		\Else
		\State \Call
		{spreadUpdate}{r,$i_e$,[1,$|{\rm In}(e)|$,$|{\rm Out}(e)|$]}		
		\EndIf
		\EndFor
		\State $i_e\gets i_e + 1$;
		\EndFunction
	\end{algorithmic}
	\caption{Split updates are spread into each ring component and 
		initiate a recomputation of the components surrounding the event node.} 
	\label{ch4:alg:split_update}
\end{algorithm} 

\begin{algorithm}[t]
	\begin{algorithmic}[1]
		\Function{SplitEvent}{list$<$int$>$ cyclic\_order}
		\State \Call {RingQuery}{cyclic\_order};
		\State $n_{\rm new}$ $\gets$ 1;
		\State $e_{\rm new}$ $\gets$ $|{\rm In}(e)|$ $*\;1/2$;
		\State $f_{\rm new}$ $\gets$ $|{\rm Out}(e)|$ $*\;1/3$;
		\For {$n \in {\rm neighbor\_ring}$}
		\If {${\rm pos}(n)$} 
		\State $e_{\rm new}$ $\gets$ $e_{\rm new}-1/2$;
		\State $f_{\rm new}$ $\gets$ prevOrNextFI? $f_{\rm new}$ 
		\hspace*{46pt}$-((1/2 * 1/3)\,* f_{\rm new})$ : $f_{\rm new}$;
		\ElsIf {${\rm neg}(n)$}
		\State $n_{\rm new} \gets n_s$ + (1/n.innerEdges);
		\State $e_{\rm new} \gets e_{\rm new} + 1$;
		\State $f_{\rm new}$ $\gets$ $f_{\rm new}$ + ((1/n.innerEdges) $*$ \hspace*{42pt} n.outerEdges);
		\EndIf
		\EndFor
		\For {$n \in {\rm neighbor\_ring}$}
		\State\Call{SpreadUpdate}{$n$,$i_e$,[$n_{\rm new}$,$e_{\rm new}$,$f_{\rm new}$]};
		\EndFor
		\State $i_e \gets i_e + 1$;
		\EndFunction
	\end{algorithmic}
	\caption{\textit{Split update event} messages are created by the \textit{SplitEvent} method of the update state.\\
		\newline
		\footnotesize The prevOrNextFI method represents that outer edges are only counted for positive event node
		neighbors when the previous or next node in the neighbor-ring also has a FI-value of one. 
		The neighbor-ring variable here is the list of neighbor sensor-IDs, and the pos- and neg-methods 
		represent the determination of a neighbor's node type as positive and negative event node respectively. 
		A sensor's FI-value and its event time stamp are used for the determination of the node type.}
	\label{ch4:alg:split_update_event}
\end{algorithm} 

\begin{lemma}\label{ch4:lemma:split_update}
Split updates correctly update regions after split events, independent on
whether other events occur concurrently inside the split regions.
\end{lemma}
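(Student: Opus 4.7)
The plan is to follow the two-phase structure of the split update protocol and verify that each phase produces the correct component numbers at every node that ends up in the split region, then to argue robustness against concurrent events by reusing the bookkeeping devices (event-IDs, target-IDs, split tokens) in the same spirit as Lemmas~\ref{ch4:lemma:normal_update} and~\ref{ch4:lemma:merge_update}.

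First I would analyze the phase-one subtraction. A single split event node $e$ emits a split update message per ring component with the negative of the previously surrounding component's numbers $[-n_c,-m_c,-f_c]$ and the target-ID $k_e$ equal to the old component-ID. Because every node in that old component receives such a message (via forwarding in each ring component, as motivated in Section~\ref{ch4:split_update}), its counters are reset to zero, modulo any concurrent updates. I would then argue that if several split event nodes sit in the same region, the split tokens list guarantees that the subtraction is applied \emph{exactly once} per target-ID per sample interval: duplicate split updates from other event nodes in the same original region are filtered. Concurrent normal and merge updates from other events do not interfere, because they act on different event-IDs and contribute their own well-defined increments to $[n_s,m_s,f_s]$, which by Lemma~\ref{ch4:lemma:normal_update} and Lemma~\ref{ch4:lemma:merge_update} are commutative with the subtraction.

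Next I would handle the phase-two recomputation. Every node $u$ with $c_u=k_e$ issues a split update event message containing the local contribution $[1,|{\rm In}(u)|/2,|{\rm Out}(u)|/3]$. Summing over all $u$ in a given split component recovers $n'_c$ exactly, $m'_c$ exactly (each interior edge is counted by its two endpoints), and $f'_c$ exactly (each triangle by its three vertices). The two correction cases in Algorithm~\ref{ch4:alg:split_update_event} ensure that the contribution of edges and triangles incident to neighbouring event nodes is not double-counted or missed: the increments sent by positive event nodes are subtracted from $u$'s local count so they are only injected once into the region, and the decrements sent by negative normal event nodes are added back because their inner/outer edges were already deducted via their own normal updates. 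I would verify these corrections case by case (positive-positive, positive-negative, across ring boundaries) using the FI-timestamps and the prevOrNextFI guard.

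Finally I would assemble the robustness argument. Since each update message is idempotent per event-ID at any receiver, the processing order inside a split component is irrelevant; the final values $[n_s,m_s,f_s]$ and $c_s$ at every node depend only on the multiset of events that concern its region. Split tokens prevent double subtraction, target-IDs $k_e$ restrict application to nodes that actually belonged to the pre-event component, and the new component-ID convention (lowest accepted ID wins) yields a single consistent $c_e$ per connected split piece, even when several split, self-split, or self-merge events touch the same region concurrently. The main obstacle I expect is phase two: showing that the halved-edge and thirded-face contributions, corrected for neighbouring event nodes, really sum to the right integers in every adjacency configuration. I would therefore spend most of the proof on a careful case analysis of edges and triangles incident to the new boundary between the split pieces, leaving the idempotence and ordering arguments as direct analogues of the earlier two lemmas.
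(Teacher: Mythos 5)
Your proposal follows the paper's proof essentially verbatim in structure: a two-phase argument in which phase one erases the pre-split component's numbers exactly once per target-ID (with concurrent merge updates cancelling the subtraction in regions merged into a split component), and phase two recomputes the new components by summing vertex, half-edge, and third-face contributions restricted via the target-ID. The additional case analysis you promise for the event-node boundary corrections goes beyond what the paper's own proof spells out, but it is elaboration of the same argument rather than a different route.
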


\begin{proof}
First, the split components' numbers are reset to zero by spreading split update messages in the 
affected regions. All neighboring regions which were concurrently merged into one of the split regions 
will also receive split update messages, and will subtract the same amounts from their component 
numbers. Effectively, the component numbers of the split region in the previous state $S_{t-1}$
are subtracted from all component nodes:
To receive a split update message, an update node must either be part of the
targeted component, in that case its numbers are set to zero, or
it receives the split message from a different region through a merge event 
node, in which case the split message's numbers will be canceled by a merge message 
of the same amount. In either case, the split components' numbers will be erased from
all network nodes.
\par
During the second split phase each node sends split update event messages. 
Each update node counts one vertex, two connected nodes count a half of an edge each, and three
connected nodes each count a third of one face.
The overall sums of split update event messages will represent the split components’
numbers after split events. 
Due to the target-ID only regions surrounding split event nodes during the state $S_{t-1}$ 
are affected. Regions merged to concurrent split regions will not be recomputed.
\end{proof}

\begin{figure}[t]
	\centering
	\includegraphics[width=80mm]{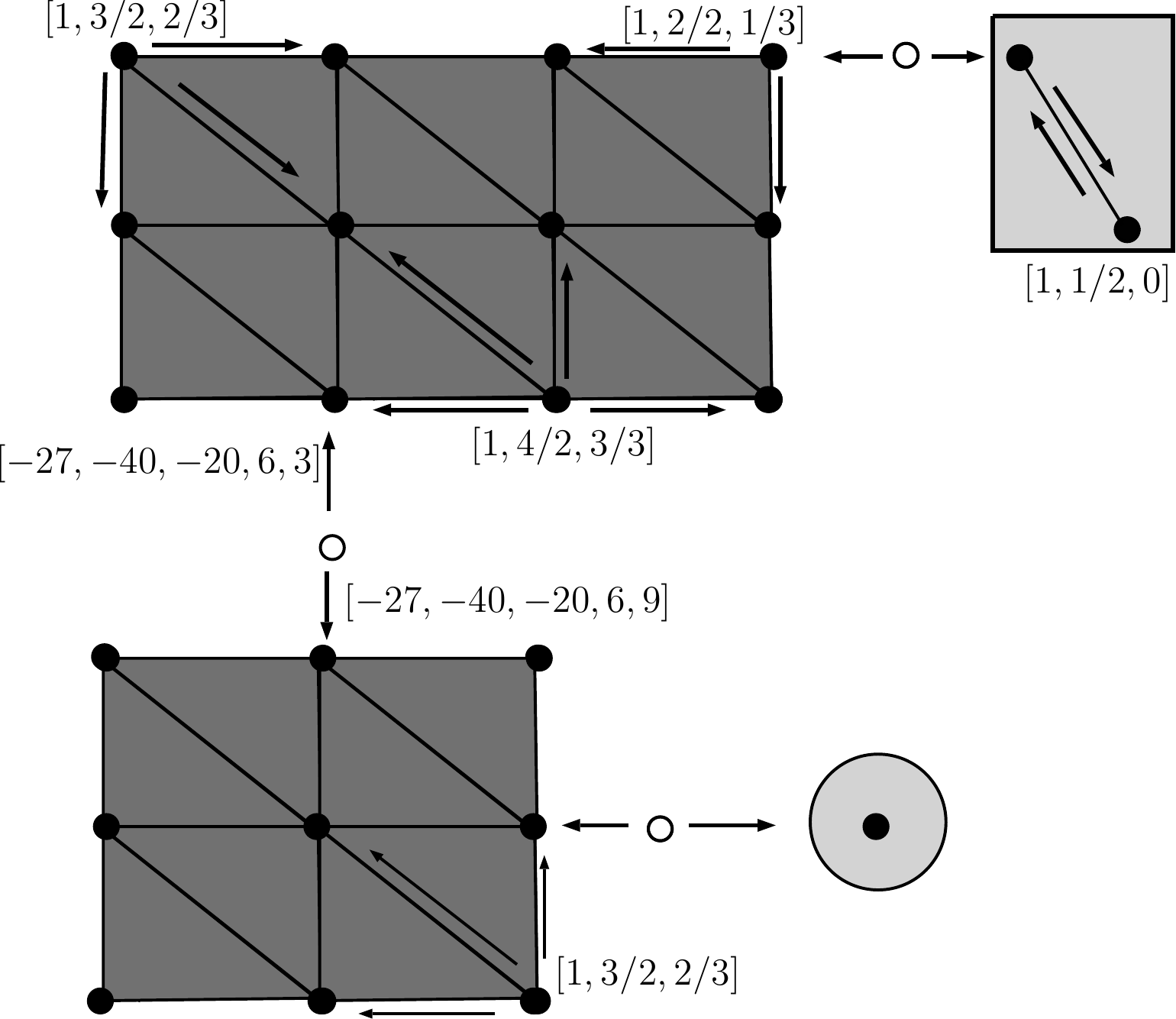}
	\caption{\footnotesize Three concurrent split events. The components' nodes
		reset their numbers via received split update messages
		before 	sending own \textit{split update event messages} to recompute 
		their components.}\label{f-sue}
\end{figure}

\begin{figure}[b]
	\centering
	\includegraphics[width=45mm]{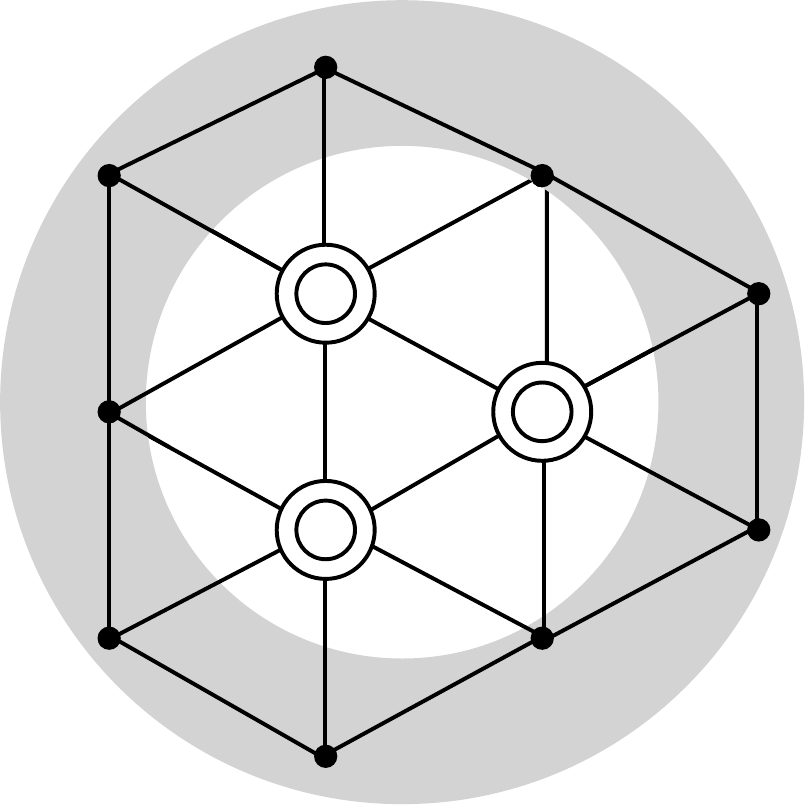}
	\caption{\footnotesize A hole appearance event with an event region 
		of three nodes. The surrounding region ring has 
		still the values $(n=12,e=24,f=13)$, and all three event nodes determine $({\rm In}(e)=4, {\rm Out}(e)=3)$ and a topological invariant event
		with: $\Delta\beta_0 = 0, \Delta\beta_1 = (-11+20-10+1) - (-12+24-13+1) = 0$.}\label{ch4:fig:inc}	
\end{figure}

\subsubsection{Event Regions} \label{ch4:event_region}
Although the concurrent detection of multiple events inside a connected component
of $\Delta_\theta$ is possible, regions consisting of event nodes cannot reliably
determine event types. An \textit{event region} is defined as a connected component of at least two 
network nodes where each contained node, independent on its FI-value, is an event node. 
Figure~\ref{ch4:fig:inc} demonstrates one such region of three negative event nodes.
In this particular example none of the three event nodes is able to detect a hole appearance.
Event nodes inside an event region cannot correctly determine their inner/outer edge numbers 
(Sect.~\ref{ch3:event_hom}) as the neighbor-rings' values do not reflect the previous network state. 
To enable event detection for event regions, event regions have to be replaced
with single event nodes. For this, each event node maintains a counter which indicates the sampling
interval at which the last event was detected. This time stamp can be passed along in ring
queries (Sect.~\ref{ch3:ring_query}). Neighboring event nodes send back blocking messages if both nodes have
the same time stamp and the querying node has a higher sensor-ID. Event nodes stop their event detection process upon
reception of such blocking messages and will retry the event detection in the next sample interval, 
provided their FI-values stay unchanged.
Event regions are thus replaced by single event nodes and their events' 
are reconstructed as chains of multiple sub-events.

\begin{theorem}\label{ch4:thm:r_update}
The region update correctly updates all component data for each node
inside $\Delta_\theta$ after non-incremental events detected at time $t$ 
when the sample interval is long enough to process all update messages.
\end{theorem}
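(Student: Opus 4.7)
The plan is to reduce the theorem to the three preceding lemmas by a careful case analysis on how the concurrent events interact in space. First, I would invoke the event-region mechanism of Section~\ref{ch4:event_region}: in any sample interval, every connected cluster of two or more mutually adjacent event nodes is collapsed (via the blocking protocol on time stamps and sensor-IDs) to a single representative event node whose inner/outer-edge counts correctly reflect the state $S_{t-1}$. This lets me assume, without loss of generality, that every event node detected at time $t$ is isolated from other event nodes in $\Delta$, so the ring-query procedure of Section~\ref{ch3:ring_query} yields valid neighbor-rings and ring-component data, and the event-decision diagram of Section~\ref{ch3:event_tree} correctly classifies each event node as issuing a normal, merge, or split update (or a combination, as in merge/self-merge).

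Next I would group the events by the connected components of $\Delta_\theta$ they affect, working in the post-event topology. For a given post-event component $C$, the contributing event nodes fall into three disjoint classes: (i) normal (including hole- or region-appearance/disappearance and invariance) event nodes lying in $C$; (ii) merge event nodes that fused several pre-event components into (part of) $C$; and (iii) split event nodes whose pre-event component is being recomputed within $C$. Applying Lemma~\ref{ch4:lemma:normal_update} shows that each normal event node contributes its own signed triple $[\pm1, e_{\rm new}, f_{\rm new}]$ to every node of $C$; applying Lemma~\ref{ch4:lemma:merge_update} shows that each merge event node adds the pre-event numbers of each of its surrounding components exactly once, thanks to the merge-token list and target-IDs; and applying Lemma~\ref{ch4:lemma:split_update} shows that each split event node first erases the pre-event numbers of its surrounding component (again exactly once, via the split-token list keyed by target-ID $k_e$) and then initiates a fresh recomputation through split-update-event messages.

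To combine these, I would argue by commutativity and idempotence that the final component numbers at every node of $C$ equal the sum of the updates dictated by each contributing event, independent of message-arrival order. Commutativity holds because all updates are additions into the triple $[n_s,m_s,f_s]$; idempotence is enforced by event-IDs (preventing replay of normal updates), merge tokens (preventing double application of the same source component), and split tokens (preventing repeated zeroing of the same target component). For component-IDs, the tie-breaking rule that always accepts the smallest $c_e$ among split/merge messages within a sample interval guarantees that every node of $C$ converges to the same identifier, regardless of which merge or split message arrives first. Nodes that are not in any affected component receive no updates and retain their pre-event data.

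The main obstacle, and the step I would spend the most care on, is the interaction between a merge update and a concurrent split of one of the merged regions (and symmetrically, a split whose output is immediately re-merged by another event node). Here I must verify that the negative split-update contributions from the pre-event component and the positive merge-update contributions carrying that same component's numbers cancel exactly at every node they both reach, so that the second-phase split-update-event recomputation, augmented by the neighboring-event corrections in Algorithm~\ref{ch4:alg:split_update_event} (the $1/2$- and $1/3$-weighted accounting of edges and faces shared with positive/negative event-node neighbors), reproduces the correct post-event triple. Once this cancellation is established, together with the lemmas and the commutativity/idempotence argument above, the theorem follows.
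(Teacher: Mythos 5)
Your proposal follows essentially the same route as the paper's proof: it reduces the theorem to Lemmas~\ref{ch4:lemma:normal_update}--\ref{ch4:lemma:split_update}, relies on the order-independence of additive difference updates together with event-IDs and merge/split tokens to prevent replay, uses the event-region reduction of Section~\ref{ch4:event_region} to rule out adjacent event nodes, and handles component-ID convergence via uniqueness of sensor-ID-based identifiers and the lowest-ID rule. The merge/split cancellation you single out as the delicate step is exactly the point the paper's proof also asserts (the $-[n_c,e_c,f_c]$ split contribution cancelling the $[n_c,m_c,f_c]$ merge contribution, leaving only the recomputed $[n_d,m_d,f_d]$), so your plan is a faithful, if slightly more explicitly structured, version of the published argument.
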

\begin{proof}
All update messages sent during the region update process contain component
number differences. Therefore the messages' order of arrival/processing has no impact
on the resulting component numbers after the region update. 
By Lemma \ref{ch4:lemma:normal_update} \textit{normal updates} reach all nodes inside the affected regions
and are applied once at each node. Each normal update applies the respective event node's numbers
to its surrounding component.
By Lemma \ref{ch4:lemma:merge_update} \textit{merge updates} increment all merged components' numbers to the
component numbers of the merged regions. Due to the target-ID no
component is counted twice; each components' numbers are added once at
each node inside merged regions.
\par
By Lemma \ref{ch4:lemma:split_update} \textit{split events} cause a recomputation of all split regions.
Overall, the list $-[n_c,e_c,f_c]$ of previous component numbers in 
state $S_{t-1}$, and the lists of split components' new numbers $[n_d,m_d,f_d]$
are distributed into each component $d$. Regions concurrently
merged into split regions additionally will receive merge updates
$[n_c,m_c,f_c]$ based on the previous network state. The split update's
numbers will cancel the previous component's numbers at each node.
Only the component numbers $[n_d,m_d,f_d]$ remain.
\par
The usage of unique sensor-IDs for component-IDs guarantees that no two components
can receive the same component-IDs during one sample interval. However, it is possible
that two different components are created with the same sensor-IDs during
different network states. When the event node creating the event-ID is the same
in both states, the event number $i_e$ of the event-ID in $c_e$ will be different.
When the event nodes are different, the event node sensor-IDs will be different.
In either case the component-ID will be unique for all components.
\par
One cause for false component information would be neighboring event nodes, in which 
case the differences to previous component numbers could not be inferred reliably at event nodes. 
However, an event node cannot have a neighboring event node, since by (Sect.~\ref{ch4:event_region}) event regions are replaced by single event nodes. 
When all update messages of a region update are processed, the network's state is changed
from $S_{t-1}$ to $S_t$.
\end{proof}

\begin{figure}[t]
	\centering
	\includegraphics[width=70mm]{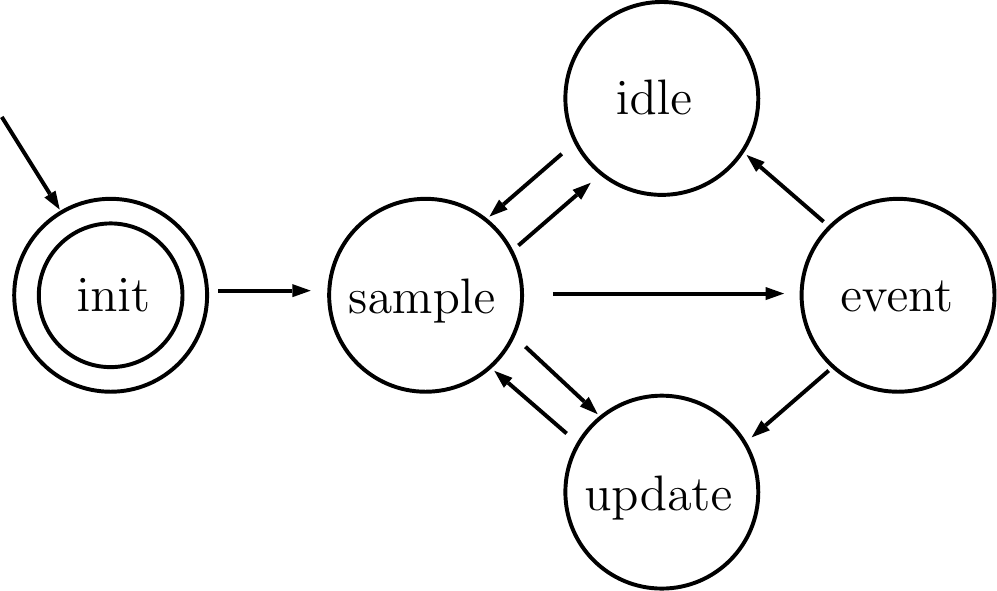}
	\caption{The FSM for topological event detection.}\label{ch4:fig:fsm}
\end{figure}
\subsection{Sensor FSM Model}\label{ch4:fsm}

A sensor designated for topological event detection can be described as a finite state machine (FSM) 
with five states (Fig.~\ref{ch4:fig:fsm}):
Initially no network data is available and all sensors are in the \texttt{init} state. 
Here the cyclic ordering of neighbor nodes necessary for event detection is established. 
Next follows the \texttt{sample} state. 
FI-values are sampled and the state changes accordingly. 
In case of a detected change of the FI-value, the \texttt{event} state is entered.
After event processing, an event node either is set inactive, 
i.e., it changes into the \texttt{idle} state, or the \texttt{update} state is entered.
In both cases the sample state is reentered at the beginning of the next sample interval.

\subsubsection{Init}
The network's communication structure is set up in the \texttt{init} state.
For this, sensors exchange their neighbor lists to determine lists of \textit{shared neighbors}. 
Then each sensor creates its own cyclic ordering of direct neighbors by passing a cycle message around its neighbor-ring; 
i.e., neighbors pass the message to the next neighbor shared with the initiating sensor until a cycle is completed.

\subsubsection{Sample}
At the beginning of each sample interval all sensors reside in the \texttt{sample} state. 
All sensors sample new FI-values. When a changed FI-value is detected, a sensor
changes into the \texttt{event} state. Otherwise, depending on the sensor's current
FI-value, either the \texttt{idle} state  ($\hat t_s=0$), or the \texttt{update}
state  ($\hat t_s=1$) is entered.

\subsubsection{Idle}
Upon entering the \texttt{idle} state, a sensor becomes inactive.
Only queries from neighboring event nodes are answered during the current sample interval. 
Event nodes in this context also include nodes in the update state which recompute their components
after split events (Sect.~\ref{ch4:split_update}).
The idle state is automatically left when a new sampling interval starts.

\subsubsection{Event}
The event state can be subdivided into three phases. 
First, all event node neighbors are queried in the form of a \textit{ring query} to update the neighbor-ring. 
At this point an event node's detected event can be canceled out by a neighboring event node 
with a lower sensor-ID (Sect.~\ref{ch4:event_region}).
Whereupon the sensor changes into the \texttt{idle} or \texttt{update} state, depending on its
previous FI-value, its newly sampled FI-value is ignored and all data
of the previous network state are retained. After a successful ring query and the determination of 
the event type, update messages are sent to the neighboring sensors in order to update the event
node's surrounding components (Sect.~\ref{ch4:r_update}).

\subsubsection{Update}
All sensors in the update state have a current FI-value of one; 
that is, they belong to a connected component of a monitored fire region.
A sensor in the {\tt update} state processes region update messages (Sect.~\ref{ch4:r_update}) sent by event nodes. 
Upon reaching a new sample interval, the update state is left.

\subsection{Self-Split Event Detection}\label{ch4:self_split}
Although self-split events cannot be distinguished from split events (Sect.~\ref{ch3:ring}), 
they can be detected as a byproduct of the split update process. 
When a self-split event happens, the detecting event node will pass a split update message to each ring component. 
Line~\ref{ch4:alg:split_update:ID} of Algorithm~\ref{ch4:alg:split_update} reveals that, instead of using the event node's sensor-ID, 
a representative node's sensor-ID is used for the region's component-ID after the split. 
For a self-split event this means that multiple component-IDs will be spread in the same region. 
\par
Split messages which reach already recomputed nodes (i.e., the event ID is known) will still be processed if they have a lower component-ID. 
It is exactly here, where a node changes its component-ID again following the same split event, that self-split events can be detected.
The original event node will receive a special self-split message from a neighboring node detecting the event during its update state (Fig.~\ref{ch4:fig:self_split_update}).
\par
The above described method for the detection of self-split events can fail to detect self-splits in cases where other split/merge events
happen concurrently in the same region. 
When another split/merge event node in the same region transmits the lowest component-ID, that ID could be spread to all ring components before the self-split can be detected. 
In that case the self-split event is detected as a split event at the event node. 
However, even then the self-split event can be detected indirectly after the region update process.
The component numbers after the split update will reveal a decrease in the first Betti number for self-split events.
\begin{figure}[!b]
	\centering
	\includegraphics[width=60mm]{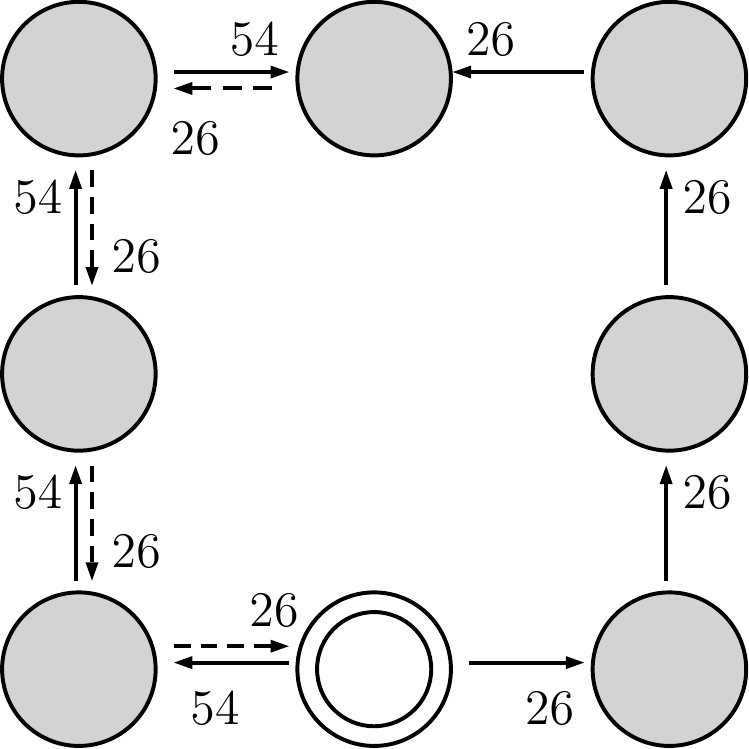}
	\caption{\footnotesize An example of a self-split event.
		The event node detects a split event,
		and transmits two split updates into both ring components.
		At the upper node the two new component-IDs 26 and 54 are received, but 
		only the lower value is used. The split message with ID 26 is passed
		along the other side of the region ring, changing the nodes' component 
		IDs from 54 to 26. At last, node 54 informs the event node of the detected self-split event.}
	\label{ch4:fig:self_split_update}
\end{figure}

\begin{figure}[t]
	\centering
	\includegraphics{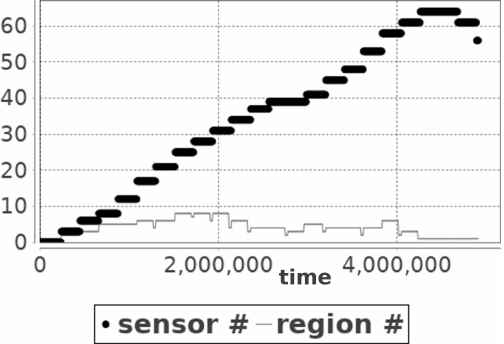}
	\includegraphics{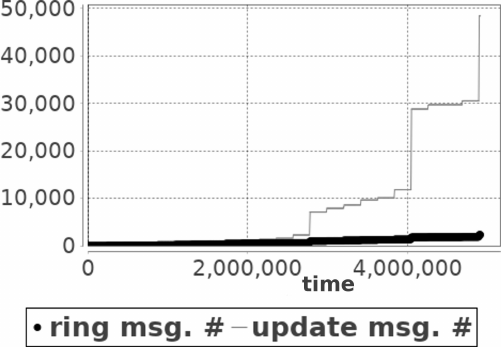}
	\caption{Charts of a network's message/region numbers during a TopED simulation of a spreading fire.}\label{ch5:fig:sim}
\end{figure}

\section{Simulations}\label{ch5}

The \textsf{GAMA}~\cite{gama} based multi-agent simulation tool \textit{TopED} (\textit{Top}ological \textit{E}vent \textit{D}etection) was 
developed in conjunction with this paper for the simulation of topological event monitoring. 
In the simulation network sensors are deployed in hexagonal grids, each sensor is located at the center of a hexagon, and communication 
links are established between sensors of neighboring hexagons. 
The resulting triangulation is a Whitney triangulation fulfilling all criteria defined in Section~\ref{ch3:sensor_model}.
In addition to the sensor network, randomly spreading forest fires are simulated as the to be observed regions. 
Coordinates of the simulated world contain FI-values  reflecting temperatures as scalars. 
Sensors sample the FI-values and convert them into binary values. 
The resulting subgraph of all sensors with FI-values of~one represents the simulated forest fire. 
\par
Supported by the TopED simulation are the nine topological event types discussed in Section~\ref{ch3:event_hom}. 
Figures~\ref{ch5:fig:fire1} and~\ref{ch5:fig:fire2} show examples of networks simulated with TopED. 
The underlying scalar field is represented as a heat map. 
Black regions represent fires, bright nodes indicate sensors with FI-values of~one, and dark sensors have FI readings of~zero. 
Additionally, communication links together with component-IDs are displayed for sensors lying inside fire regions.
Topological events are highlighted with event numbers, subscripts indicate the Betti number differences $(\Delta\beta_0, \Delta\beta_1)$,
and positive/negative events are indicated by bright/dark numbers.
Most computations during event monitoring (event/update state) are simple additions, and event nodes transmit messages only to a few select representative nodes.
Thus the complexity of computation at single network nodes is negligible. 
Important as a metric of complexity for distributed computations is the overall amount of transmitted messages:
\par
All in all approximately $(e \cdot n_r)$ messages are transmitted during \textit{ring queries} (Sect.~\ref{ch3:ring_query}): 
Each node in a query chain transmits one message, two chain ends are sent back per ring component and two messages are sent to zero nodes 
or detect a cycle per ring component. 
That makes for a total of $c_r \cdot (r_n + 4)$ messages, where $c_r$ denotes the number of ring components and $r_n$ represents the average length of a ring component. 
Not counting the number of zero nodes directly queried by an event node, this number more or less is equal to the number of event node neighbors $n_r = |N_s|$. 
Hence, with $e$ events happening during one sample period, ($e \cdot n_r$) ring messages are transmitted.
\par
During each sample period of a network approximately $(e \cdot n_r \cdot n_c)$ \textit{region update messages} (Sect.~\ref{ch4:r_update}) are transmitted in total: 
Each update node passes one update message to each of its neighbor nodes. 
Let $n_c$ be the number of non-event nodes lying in positive regions in which at least one event happened, 
$e$ be the number of events, and $n_r$ be the average number of neighbors of each node. 
Then $(e \cdot n_r \cdot n_c)$ update messages are transmitted during one sample period.
\par
When split events occur, update nodes additionally execute ring queries to recompute their components' numbers, 
and create \textit{split update event messages} (Sect.~\ref{ch4:split_update}). 
All affected update nodes start additional region updates with message amounts comparable to regular detected events. 
In such cases the update message number can increase up to $(e \cdot n_r \cdot n_c) + (e \cdot n_r \cdot n_c^2)$,  
while the number of ring messages can increase up to $(e+(e \cdot n_c)) \cdot n_r$.
\par
The charts in Figure~\ref{ch5:fig:sim} display the amount of transmitted messages during the simulation of a spreading forest fire with TopED.
In the lower chart the number of ring and update messages are shown separately, 
while the upper chart displays the number of sensors lying inside fire regions in relation to the number of different detected regions. 
The amount of update messages clearly exceeds the number of ring messages;
update messages constitute the largest part of in-network communication.
\par
Figure~\ref{ch5:fig:fire2} shows the first detected case of merging regions during the simulation, 
it is at this point where a first visible small increase in update messages becomes visible in Figure~\ref{ch5:fig:sim}.
Afterwards, the message number steadily increases with growing region sizes.
Evidently, the variable $n_c$, the average region size, is the deciding factor for the message amount. 
And Figure~\ref{ch5:fig:fire1} highlights the worst case scenario: split updates. All three visible steps in the update message graph are the result of split events occurring.
\begin{figure}[t]
	\centering
	\includegraphics{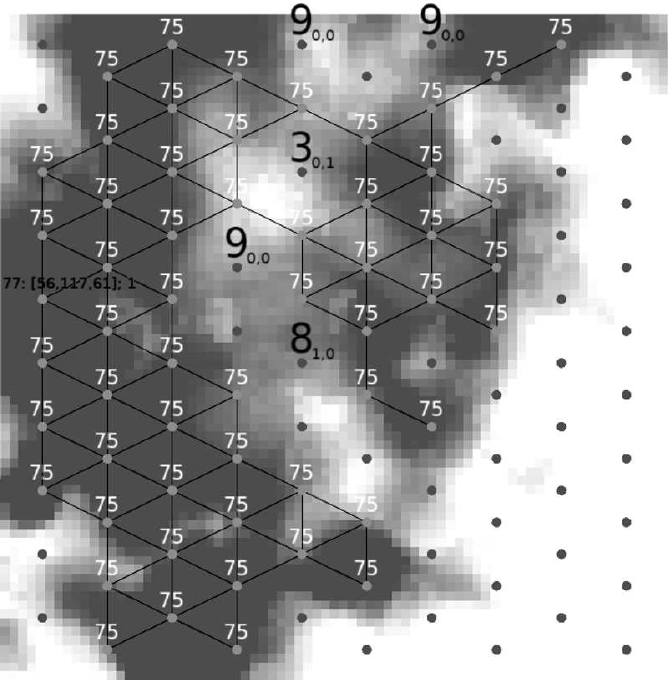}
	\caption{
		A hole-appearance, a self-split and three topologically invariant events are detected.}\label{ch5:fig:fire1}
\end{figure}
\begin{figure}[t]
	\centering
	\includegraphics{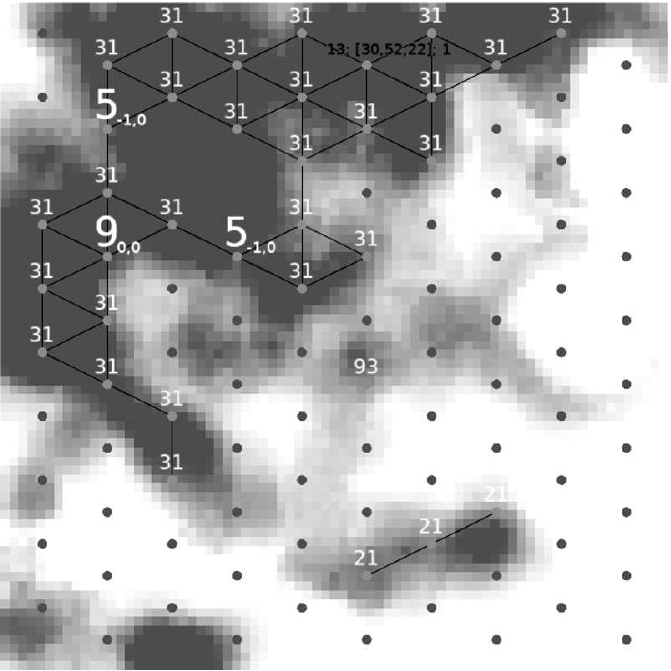}
	\caption{
		Two merge events of the same region occur concurrently, plus one topologically invariant event.}\label{ch5:fig:fire2}
\end{figure}

\section{Conclusion}\label{ch6}

After the definition of Betti numbers and their computation for graphs associated to Whitney
triangulations in Section~\ref{ch2}, 
the topological event detection model central to this paper was defined in Section~\ref{ch3}. 
Events are detected locally via homology computations at event nodes. 
In total nine basic event types are supported. Component numbers necessary for 
graph homology computation are updated via region updates. 
The usage of differences for update messages (Sect.~\ref{ch4:r_update}) and the reduction of
event regions (Sect.~\ref{ch4:event_region}) to single event nodes allows for the 
conflict-free detection of non-incremental events.
\par
We assumed a simplified communication model with globally synchronized clocks and error-free 
communication for event detection. This model was used to verify the correct monitoring 
of topological events in simulation. In practice locally synchronized clocks are used for 
distributed computations. For application in real-world examples (e.g., in WSNs) 
the definition of an extended message model dealing with communication errors will be required. 
\par
The last section showed that the complexity of topological event monitoring lies in the region update
process, split updates in particular cause a great message amount. It was assumed that all update nodes
execute ring queries when performing a \textit{split update} (Sect.~\ref{ch4:split_update}). 
Most of the nodes' neighbor values are unchanged when compared to the last sample interval of the network. Additional ring queries are unnecessary for these nodes. 
An extension to the topological event detection model could include event nodes notifying 
their neighbors of their changed FI-values during ring queries. 
With this additional bit of information update nodes not only could avoid the execution of 
additional ring queries during split updates, but also could transmit update messages exclusively
to neighbor nodes with FI-values of~one.
\par
Section~\ref{ch3:event_tree} introduced the event decision diagram and stated that Betti number
differences can be inferred from the neighbor-ring.
When neither the components' network numbers $n_c,m_c,f_c$ nor their exact Betti numbers
are required, the neighbor-ring alone is sufficient for event detection. 
Ignoring the actual Betti numbers, region updates are then reduced to the spreading of 
component-IDs after merge/split events, 
making the component recomputation during the \textit{split update} (Sect.~\ref{ch4:split_update}) superfluous.

\bibliographystyle{acm}
\bibliography{bibliography}

\end{document}